\newtheorem{theorem}{Theorem}[section]
\newtheorem{definition}{Definition}[section]
\newtheorem{lemma}{Lemma}[section]
\newtheorem{proposition}{Proposition}[section]
\newtheorem{remark}{Remark}[section]
\newcommand{\figcaption}{\def\@captype{figure}\caption}
\newcommand{\tabcaption}{\def\@captype{table}\caption}
\def\de{{\delta}}
\def\ga{{\gamma}}
\font\tenmsbm=msbm10\textfont
\font\sevenmsbm=msbm7
\def\EE{\mathbb E}\def\HH{\mathbb H}\def\PP{\mathbb P}
\def\RR{\mathbb R}
\def\PP{\mathbb P}
\def\cG{{\cal G}}
\def\cM{{\cal M}}
\def\<{\left<}\def\>{\right>}
\def\({\left(}\def\){\right)}
\begin{document}
\title{Mean-variance portfolio selection under partial information with drift uncertainty}
\author{Jie Xiong\footnote{Department of Mathematics and SUSTech International center for Mathematics, Southern University of Science and Technology, Shenzhen, 518055, China. 
This author is supported by National Natural Science Foundation of China grants 61873325 and 11831010 and Southern University of Science and Technology Start up fund Y01286220. Email: xiongj@sustech.edu.cn.
} \and 
Zuo Quan Xu\thanks{Department of Applied Mathematics, The Hong Kong Polytechnic University, Kowloon, Hong Kong. The author acknowledges financial supports from NSFC (No.11971409), Hong Kong GRF (No.15204216 and No.15202817), and the Hong Kong Polytechnic University. Email: maxu@polyu.edu.hk.} 
\and Jiayu Zheng\footnote{Corresponding author. Department of Math and Stat Science, University of Alberta, Edmonton, Canada. This author is supported by National Natural Science Foundation of China grant 11901598. Email: jy\_zheng@outlook.com.} }
\date{}
\maketitle
\begin{abstract}
In this paper, we  study the mean-variance portfolio selection problem under partial information with drift uncertainty. First we show that the market model is   complete even in this case while the information is not complete and the drift is uncertain. Then, the optimal  strategy based on partial information is derived, which reduces to solving a related backward stochastic differential equation (BSDE). Finally, we propose an efficient numerical scheme to approximate the optimal portfolio that is the solution of the BSDE mentioned above. Malliavin calculus and the particle representation play important roles in this scheme.
\bigskip
\end{abstract}

\noindent{\bf Keywords:} Mean-variance portfolio selection; Malliavin calculus; partial information; drift uncertainty

\noindent{\bf AMS subject classifications:} 91B28, 93C41, 93E11

\section{Introduction}
The mean-variance portfolio selection model pioneered by Markowitz \cite{MM} has paved the foundation for modern portfolio theory and has been widely applied in financial economics. Markowitz proposed and solved the problem in a single period setting. For half of a century, however, the optimal dynamic mean-variance portfolio selection problem was not solved due to the non-separable structure of the variance minimization problem in the sense of dynamic programming. This difficulty was finally overcame by Li and Ng \cite{LN} and Zhou and Li \cite{ZL} via an embedding scheme, for multi-period and continuous-time cases, respectively. Since then, many scholars have devoted their attentions to the study of the dynamic extensions of the Markowitz model, see, for example, Li et al. \cite{LZL}, Lim and Zhou \cite{LimZ}, Zhou and Yin \cite{ZY}, Hu and Zhou \cite{HZ}, Bielecki et al. \cite{BJPZ}, Li and Zhou \cite{LZ}, Chiu and Li \cite{CL} in continuous-time settings. All these works assume that the Brownian motions that are driving the stock prices are completely observable to the investors. In reality, however, the driving Brownian motions are often not observable to the investors, and the stock prices are the only observable information based on which the investors make the decisions. This fact motivates the study of the so-called partial information portfolio selection problem. Xiong and Zhou \cite{XZ} established the separation principle to separate the filtering and optimization problems for the mean-variance portfolio selection problem with partial information. They also developed analytical and numerical approaches in obtaining the filter as well as solving the related backward stochastic differential equation. 

The optimal redeeming problem of stock loans under drift uncertainty has been studied by Xu and Yi \cite{XY}. In their model, the inherent uncertainty of the trend of the stock is modeled by a two-state random variable representing bull and bear trends, respectively; the current trend of the stock is not known to the investor so that she/he has to make decisions based on partial information. They derive the optimal redeeming strategies based on the prediction of the stock trend.

In this paper, we study a mean-variance problem under partial information with drift uncertainty. Our contributions to the literature are summarized below: First, we show that the market model is   complete even in this case while the information is not complete and the drift is uncertain. Second, 
the optimal strategy based on partial information is derived, which involves the optimal filter of the drift.  Third, an efficient numerical approximation is suggested to solve the BSDE which arises from the mean-variance problem under drift uncertainty. This scheme is  investigated in the context of the Malliavin calculus approach for the
approximation of conditional expectations. We also prove the convergence of our numerical scheme, and study the $L^2$ error induced by the Euler approximation and by the strong law of large number (SLLN).

The rest of the paper is organized as follows. Some preliminary results on filtering and Malliavin calculus are given in Section 2. In Section 3, we derive the innovation process associated with the posteriori probability process of the drift uncertainty model and study its optimization problem under partial information. We also prove the completeness of the market under $\mathcal{G}$ filtration (partial information).  A new numerical scheme is proposed and the asymptotic behavior is studied in Section 4, a couple of numerical results are also presented.

\section{Preliminaries}
\setcounter{equation}{0}

In this section, we state some elementary facts about stochastic filtering and Malliavin calculus for the convenience of the reader. We refer the reader to Sections 8.1-8.3 of Kallianpur \cite{GK} for more details about the general filtering problem and the stochastic equation of the optimal filter, and the book of Nualart and Nualart \cite{Nua} about the Malliavin calculus. 
\par
Let $T$ be a fixed positive constant representing the investment horizon. 
Let $(\Omega, \mathscr{F}, P)$ be a complete probability space and let $\mathcal{F}_t$, $t \in [0, T]$, be an increasing family of sub $\sigma$-fields of $\mathscr{F}$. 
The signal $h_t(\omega)$ and the observation $Z_t(\omega)$, $t \in [0, T]$, are assumed to be two $N$-dimensional processes defined on $(\Omega, \mathscr{F}, P)$ and further related as follows:
\begin{eqnarray}\label{cond1}
Z_t(\omega) = \int_0^t h_u(\omega)du + W_t(\omega),
\end{eqnarray}
where $W_t$ is an $N$-dimensional Wiener process, and $h_t(\omega)$ is a $\RR^{N}$-valued, $(t,\omega)$-measurable function satisfying 
\begin{eqnarray}\label{cond4}
\int_0^T\EE(|h_t|^2)dt <\infty,
\end{eqnarray}
where $|\cdot|$ denotes the Euclidean norm of $N$-dimensional vector. Further, for each $s \in [0, T]$, the $\sigma$-fields 
$\mathscr{F}_s^{h,W}:= \sigma\{h_u, W_u, 0 \le u \le s\}$ and $\mathscr{F}_s^W:= \sigma\{W_{u'} - W_u, s \le u\le u' \le T\}$
are independent. 
Let $\{\mathscr{F}^Z_t\}_{0\leq t\le T}$ be the filtration generated by $Z$. This filtration is called the observation $\sigma$-field. Let $v_t := (v_t^1, \cdots, v_t^N)'$, $t \in [0, T]$, be an $N$-dimensional $\mathscr{F}^Z_t$-adapted innovation process, which is also an $N$-dimensional $\mathscr{F}^Z_t$-adapted Brownian motion.

We list three theorems for ready references. The following one appears in Section 8.3 of \cite{GK} (page 208). 

\begin{theorem}\label{thm2.1}
Under conditions \eqref{cond1} and \eqref{cond4}, every separable, square-integrable $\mathscr{F}^Z_t$-martingale $Y_t$ is sample path continuous a.s. and has the representation 
\begin{eqnarray}
Y_t -\EE(Y_0) = \sum_{i=1}^{N} \int_{0}^{t} \Phi_s^i dv_s^i,\quad t \in [0, T],
\end{eqnarray}
where 
\begin{eqnarray}
\int_0^T \EE(|\Phi_s|^2)ds<\infty
\end{eqnarray}
and $\Phi_s := (\Phi_s^1, \cdots , \Phi_s^N)'$ is jointly measurable and adapted to $\mathscr{F}^Z_t$.
\end{theorem}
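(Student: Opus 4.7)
The plan is to reduce the theorem to the classical martingale representation theorem for Brownian motion via a Girsanov change of measure, a strategy going back to Fujisaki--Kallianpur--Kunita. Under the integrability assumption \eqref{cond4}, the Dol\'{e}ans--Dade exponential
\[
\zeta_t := \exp\left(-\int_0^t h_s' dW_s - \tfrac{1}{2}\int_0^t |h_s|^2 ds\right)
\]
is a strictly positive local martingale, and (possibly after a standard localization argument) one can define an equivalent probability $\tilde{P}$ on $(\Omega, \mathscr{A})$ by $d\tilde{P}/dP = \zeta_T$. By Girsanov's theorem, under $\tilde{P}$ the observation process $Z_t$ is an $N$-dimensional $(\mathscr{F}_t, \tilde{P})$-Brownian motion; in particular it is a Brownian motion with respect to the sub-filtration $\mathscr{F}_t^Z$ under $\tilde{P}$, and that filtration is the natural filtration of this new Brownian motion.

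Given a separable, square-integrable $\mathscr{F}_t^Z$-martingale $Y_t$ under $P$, I would introduce its Bayes companion under $\tilde{P}$: set $\eta_t := \EE[\zeta_T^{-1} \mid \mathscr{F}_t^Z]$ and $\tilde{Y}_t := Y_t / \eta_t$. Standard Bayes-rule calculations show that $\tilde{Y}_t$ is an $(\mathscr{F}_t^Z, \tilde{P})$-martingale, square-integrable after controlling $\eta_t$ by moments of $\zeta_T^{-1}$. The classical It\^{o} representation theorem then produces a predictable, square-integrable $\tilde{\Phi}$ such that
\[
\tilde{Y}_t = \tilde{Y}_0 + \int_0^t \tilde{\Phi}_s' \, dZ_s, \qquad t \in [0,T].
\]
In particular $\tilde{Y}_t$, and consequently $Y_t = \eta_t \tilde{Y}_t$, is sample-continuous, which already gives the first assertion of the theorem.

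To identify the integrand in terms of the innovation, I would apply It\^{o}'s product rule to $Y_t = \eta_t \tilde{Y}_t$ under $P$ and substitute the innovation decomposition $dZ_s = dv_s + \hat{h}_s\, ds$, where $\hat{h}_s := \EE[h_s \mid \mathscr{F}_s^Z]$. Since $Y_t$ is an $\mathscr{F}_t^Z$-martingale under $P$, the finite-variation pieces of the resulting semimartingale decomposition must cancel, leaving a single stochastic integral against $v_t$. Defining $\Phi_s$ to be the resulting integrand yields the claimed representation; the square-integrability of $\Phi_s$ follows from that of $\tilde{\Phi}_s$ together with standard $L^p$ bounds on $\eta_t$ and $\tilde{Y}_t$, and the $\mathscr{F}_t^Z$-adaptedness is inherited from both factors.

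The main obstacle is the technical verification that $\zeta_t$ is a true martingale (needed for $\tilde{P}$ to be a bona-fide probability measure) and that the Bayes manipulations are legitimate on the sub-filtration $\mathscr{F}_t^Z$ rather than the ambient $\mathscr{F}_t$. If Novikov's condition fails globally one has to run the whole argument on a localizing sequence of stopping times, extract a representation on each piece, and paste the integrands together using uniqueness of the It\^{o} representation. A secondary subtlety is confirming that, under $\tilde{P}$, the filtration $\mathscr{F}_t^Z$ really coincides with the natural filtration of the Brownian motion $Z$, which is the precise point that allows invoking the Brownian representation theorem on $\mathscr{F}_t^Z$ instead of on the strictly larger $\mathscr{F}_t$.
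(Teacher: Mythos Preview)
The paper does not give its own proof of this theorem: Theorem~\ref{thm2.1} is stated in Section~2 as a preliminary result quoted verbatim from Kallianpur's monograph (Section~8.3, page~208), with no accompanying argument. There is therefore no ``paper's proof'' to compare your proposal against.

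That said, your outline is precisely the Fujisaki--Kallianpur--Kunita strategy that underlies the result in the cited reference: change measure so that $Z$ becomes a Brownian motion in its own filtration, invoke the Brownian martingale representation theorem under the new measure, and then transport the representation back to $P$ via Bayes' rule and the innovation decomposition $dZ_s = dv_s + \hat h_s\,ds$. So in spirit your approach coincides with the source the paper relies on.

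Two small points of care. First, your Bayes companion is slightly miswritten: the objects that are $(\mathscr{F}^Z_t,\tilde P)$-martingales are $L_t := \tilde\EE[\zeta_T^{-1}\mid\mathscr{F}^Z_t]$ and $M_t := \tilde\EE[Y_T\,\zeta_T^{-1}\mid\mathscr{F}^Z_t]$, with $Y_t = M_t/L_t$; your $\eta_t$ should be an expectation under $\tilde P$, not $P$, and one represents $M_t$ and $L_t$ separately rather than the quotient $Y_t/\eta_t$. Second, as you yourself note, \eqref{cond4} alone does not give Novikov's criterion, so the Girsanov step genuinely requires the localization-and-pasting argument you sketch; Kallianpur handles this carefully, and it is the only place where real work is needed.
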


The next theorem is called the Clark-Ocone formula (see Theorem 6.1.1 of \cite{Nua}). It expresses a square integrable random variable in terms of the conditional expectation of its Malliavin derivative. Let $B = \(B_t\)_{t \ge 0}$ be a multi-dimensional Brownian motion on a probability space $(\Omega, \mathscr{F}, (\mathscr{F}_t)_{t\geq 0}, P)$, where $(\mathscr{F}_t)_{t\geq 0}$ is the natural filtration of $B$ and $\mathscr{F}=\vee_{t\geq 0}\mathscr{F}_t$. Denote by $D$ the Malliavin derivative operator. We define the Sobolev space $\mathbb{D}^{1,2}$ of random variables as follows:
\[
\mathbb{D}^{1,2}=\bigg\{F\in L^0(\Omega, \mathscr{F}, P): \|F\|_{1,2}^2=\mathbb{E}(|F|^2)+ \mathbb{E}\Big[\int_0^\infty |D_tF|^2dt\Big]<\infty\bigg\},
\]
where $L^0(\Omega, \mathscr{F}, P)$ denotes the set of $\mathscr{F}$-measurable random variables.
\begin{theorem}[Clark-Ocone formula]
Let $F \in \mathbb{D}^{1,2} \cap L^0(\Omega, \mathscr{F}_T, P)$. Then, F admits the following representation
$$ F = \EE (F) + \int_0^T \EE (D_t F | \mathscr{F}_t) dB_t .$$
\end{theorem}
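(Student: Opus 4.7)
The plan is to combine the Brownian martingale representation theorem with the Malliavin integration-by-parts (duality) formula: the first produces an It\^o representation of $F$, and the second pins down its integrand as $\EE(D_t F \mid \mathscr{F}_t)$.

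I would begin with the martingale representation theorem. Since $F \in L^2(\Omega, \mathscr{F}_T, P)$ and $(\mathscr{F}_t)$ is the natural Brownian filtration, there exists a unique adapted process $\Phi \in L^2([0,T]\times\Omega)$ with
$$ F = \EE(F) + \int_0^T \Phi_t \, dB_t.$$
The task is then to identify $\Phi_t$ with $\EE(D_t F \mid \mathscr{F}_t)$ almost everywhere on $[0,T]\times\Omega$.

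To identify $\Phi$, I would test against an arbitrary bounded adapted process $u$. Since $u$ is adapted, its Skorohod integral coincides with its It\^o integral, $\delta(u) = \int_0^T u_t \, dB_t$, so by It\^o's isometry
$$\EE\Big(F \int_0^T u_t \, dB_t\Big) = \EE\Big(\int_0^T \Phi_t u_t \, dt\Big).$$
On the other hand, the Malliavin duality relation for $F \in \mathbb{D}^{1,2}$ gives
$$\EE\big(F\,\delta(u)\big) = \EE\Big(\int_0^T D_t F \cdot u_t \, dt\Big) = \EE\Big(\int_0^T \EE(D_t F \mid \mathscr{F}_t)\, u_t \, dt\Big),$$
the second equality again using that $u$ is adapted. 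Equating the two expressions and letting $u$ range over a dense family of bounded adapted processes forces $\Phi_t = \EE(D_t F \mid \mathscr{F}_t)$ in $L^2([0,T]\times \Omega)$, which is exactly the Clark-Ocone identity.

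The main obstacle is really a measurability subtlety rather than an analytic difficulty: one must interpret $\EE(D_t F \mid \mathscr{F}_t)$ as a predictable projection so that it is jointly measurable in $(t,\omega)$ and lies in $L^2([0,T]\times\Omega)$. A clean way around this is to first verify the formula on smooth cylindrical random variables $F = f(B_{t_1}, \ldots, B_{t_n})$ with $f$ smooth of polynomial growth, for which $t \mapsto D_t F$ has an explicit piecewise expression and the identity can be checked directly by applying It\^o's formula to the martingale $\EE(F \mid \mathscr{F}_t)$. The general case then follows by approximating $F$ in the $\|\cdot\|_{1,2}$ norm and passing to the limit, using It\^o's isometry on the stochastic-integral side and the $L^2$ bound on the Malliavin derivative on the conditional-expectation side.
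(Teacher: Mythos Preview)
Your argument is correct and is essentially the standard proof of the Clark--Ocone formula: obtain an It\^o representation from the Brownian martingale representation theorem, then identify the integrand via the duality relation $\EE(F\,\delta(u))=\EE\big(\int_0^T D_tF\,u_t\,dt\big)$ tested against adapted $u$. Your remark on the predictable projection and the density argument via smooth cylindrical functionals is also appropriate.

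There is nothing to compare against, however: the paper does not prove this theorem. It is stated in Section~2 as a preliminary result and attributed to Theorem~6.1.1 of Nualart and Nualart \cite{Nua}, with no proof given. Your proof is in fact the classical one found in that reference, so in effect you have supplied what the paper only cites.
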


Let $\cM(d,q,\RR)$ be a vector space of matrices with $d$ rows and $q$ columns of $\RR$-valued entries, $\|\cdot\|$ be the canonical Euclidean norm.

Denote by $L^p(0, T; \RR^d)$ the set of all $\RR^d$-valued $\{\mathcal{F}_t\}_{t\in [0,T]}$-adapted processes $f$ in the probability space $(\Omega, \mathcal{F}, \PP)$ whose $L^p$ norm are finite, namely \[\|f\|_{L^p(0, T; \RR^d)}:=\(\EE\int_0^T|f(t)|^pdt \)^{\frac{1}{p}}<\infty.\]
Let $L^p( \mathcal{F},\RR^d)$ be the set of all $\RR^d$-valued random variables $\xi $ with finite $L^p$ norm \[\|\xi\|_{p} := \(\EE |\xi|^p\)^{\frac{1}{p}}<\infty.\]

The next theorem which appears in Section 7 of \cite{PG} (Theorem 7.2), states the error approximation of the Euler scheme for the solution $(X_t)_{t\in[0,T]}$ to a $d$-dimensional stochastic differential equation 
\begin{align}\label{EulerSDE}
dX_t = b(t,X_t)dt + \sigma(t,X_t)dW_t,
\end{align}
where $b: [0,T]\times \RR^d \to \RR^d, \sigma:[0,T] \times \RR^d \to \cM(d,q,\RR)$ are continuous functions, $W = (W_t)_{t\in[0,T]}$ denotes a $q$-dimensional standard Brownian motion defined on a probability space $(\Omega, \mathcal{F}, \PP)$ and $X_0: (\Omega, \mathcal{F}, \PP) \to \RR^d$ is a random vector, independent of $W$.

The discrete time Euler scheme with step $\frac{T}{n}$ is defined by
\begin{align*}
\bar{X}_{t_{k+1}^n} = \bar{X}_{t_{k}^n} + \frac{T}{n}b\(t_k^n, \bar{X}_{t_{k}^n}\) + \sigma \(t_k^n, \bar{X}_{t_{k}^n}\) \sqrt{\frac{T}{n}} Z^n_{k+1}, \ \ &\bar{X}_0 = X_0,\\
&k =1,\cdots,n-1,
\end{align*}
where $t_k^n = \frac{kT}{n}, k =1,\cdots,n$ and $\(Z_k^n\)_{1\le k \le n}$ denotes a sequence of  independent and identically distributed random vectors given by 
\[
Z_k^n := \sqrt{\frac{n}{T}} \( W_{t_k^n} -  W_{t_{k-1}^n}\), \ \ k=1, \cdots, n. 
\]

\begin{theorem}[Strong rate for the Euler scheme]\label{Euler}
Suppose the coefficients $b$ and $\sigma$ of the SDE \eqref{EulerSDE} satisfy the following regularity condition: there exist a real constant $C_{b,\sigma, T}>0$ and an exponent $\beta \in (0,1]$ such that for all $s,t \in [0,T], x, y \in \RR^d$, 
%\begin{align}\label{Hbeta}
%(H_t^{\beta}) &\quad
%\begin{cases}
%\exists C_{b,\sigma,T}>0 \ \ \text{such that} \ \ \forall s,t \in [0,T], \forall x, y \in \RR^d,\\
%|b(t,x) - b(s,x)| + \|\sigma(t,x) - \sigma(s,x)\| \le C_{b,\sigma,T}(1+|x|)|t-s|^{\beta}, \\ 
%|b(t,x) - b(t,y)| + \|\sigma(t,x) - \sigma(t,y)\| \le C_{b,\sigma,T}(1+|x|)|y-x|.
%\end{cases}
%\end{align}
\begin{align}\label{Hbeta}
|b(t,x) - b(s,x)| + \|\sigma(t,x) - \sigma(s,x)\| &\le C_{b,\sigma,T}(1+|x|)|t-s|^{\beta}, \\ 
|b(t,x) - b(t,y)| + \|\sigma(t,x) - \sigma(t,y)\| &\le C_{b,\sigma,T}|y-x|.\label{Hbeta2}
\end{align}
Then for all $p>0$, there exists a universal constant $\kappa_p >0$, depending on $p$ only, such that for every $n\ge T,$
\begin{align}\label{rate}
\big\| \sup_{0 \le k \le n} | X_{t^n_k}- \bar{X}_{t^n_k}^n| \big\|_p \le K(p,b,\sigma, T) \(1+ \|X_0\|_p\) \(\frac{T}{n}\)^{\beta \wedge \frac{1}{2}},
\end{align}
where 
\[
K(p,b,\sigma, T) = \kappa_p C'_{b,\sigma,T} e^{\kappa_p(1+ C'_{b,\sigma,T})^2T}
\]
and
\begin{align}\label{finite}
C'_{b,\sigma,T} = C_{b,\sigma,T} + \max_{t\in[0,T]}|b(t,0)| +\max_{t\in[0,T]}\|\sigma(t,0)\|<+\infty. 
\end{align}
\end{theorem}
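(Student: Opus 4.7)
The plan is the classical Euler-Maruyama argument: introduce the continuous-time interpolant, decompose the error into a Lipschitz-in-space piece and two ``freezing'' pieces (time regularity plus one-step increment), and close the loop with Burkholder-Davis-Gundy (BDG) plus Gronwall, taking care to track the constants in the exponent.

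First, I would define the \emph{continuous} Euler scheme $\bar X^n_t = X_0 + \int_0^t b(\underline s, \bar X^n_{\underline s})\,ds + \int_0^t \sigma(\underline s, \bar X^n_{\underline s})\,dW_s$, where $\underline s = t_k$ for $s\in[t_k,t_{k+1})$ with $t_k = kT/n$. Note $\bar X^n_{t_k}$ coincides with the discrete scheme at grid points, so it suffices to bound $\|\sup_{t\in[0,T]}|X_t-\bar X^n_t|\|_p$. Using \eqref{Hbeta2} and \eqref{finite}, both $b$ and $\sigma$ have linear growth with constant comparable to $C'_{b,\sigma,T}$; a standard BDG + Gronwall argument (applied separately to $X$ and to $\bar X^n$) then yields the a priori moment bound
\begin{equation*}
\bigl\|\sup_{t\in[0,T]}|X_t|\bigr\|_p + \bigl\|\sup_{t\in[0,T]}|\bar X^n_t|\bigr\|_p \le \kappa_p\bigl(1+\|X_0\|_p\bigr)\,e^{\kappa_p(1+C'_{b,\sigma,T})^2 T},
\end{equation*}
uniformly in $n$. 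From this I would extract the one-step estimate $\|\bar X^n_s - \bar X^n_{\underline s}\|_p \le \kappa_p C'_{b,\sigma,T}\bigl(1+\|X_0\|_p\bigr)e^{\kappa_p(1+C'_{b,\sigma,T})^2 T}\,(T/n)^{1/2}$ via BDG applied over $[\underline s, s]$.

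Next, writing $e_t := X_t - \bar X^n_t$, I split each integrand as
\begin{equation*}
b(s,X_s)-b(\underline s,\bar X^n_{\underline s}) = \underbrace{b(s,X_s)-b(s,\bar X^n_s)}_{\text{Lipschitz in space}} + \underbrace{b(s,\bar X^n_s)-b(s,\bar X^n_{\underline s})}_{\text{Lipschitz, one-step}} + \underbrace{b(s,\bar X^n_{\underline s})-b(\underline s,\bar X^n_{\underline s})}_{\text{$\beta$-Hölder in time}},
\end{equation*}
and similarly for $\sigma$. The first piece is bounded by $C_{b,\sigma,T}|e_s|$; the second by $C_{b,\sigma,T}|\bar X^n_s-\bar X^n_{\underline s}|$, whose $L^p$ norm is $O\bigl((T/n)^{1/2}\bigr)$ by the preceding step; the third by $C_{b,\sigma,T}\bigl(1+|\bar X^n_{\underline s}|\bigr)(T/n)^{\beta}$ by \eqref{Hbeta}. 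Taking the supremum, applying Minkowski, BDG for the stochastic integral (which produces the factor $\kappa_p$ and a $\sqrt{T}$ from the time integral of the squared integrand), and collecting, I arrive at
\begin{equation*}
\bigl\|\sup_{u\le t}|e_u|\bigr\|_p^{\,2} \le A_n + B\int_0^t \bigl\|\sup_{u\le s}|e_u|\bigr\|_p^{\,2}\,ds,
\end{equation*}
with $B\asymp \kappa_p (1+C'_{b,\sigma,T})^2$ and $A_n \asymp \bigl[\kappa_p C'_{b,\sigma,T}(1+\|X_0\|_p)e^{\kappa_p(1+C'_{b,\sigma,T})^2 T}\bigr]^2 (T/n)^{2(\beta\wedge 1/2)}$.

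Finally, Gronwall's lemma gives $\|\sup_{u\le T}|e_u|\|_p^{\,2} \le A_n e^{BT}$, which after taking square roots and absorbing constants yields exactly \eqref{rate}. The main technical obstacle is bookkeeping: one must verify that the product of the \emph{linear} prefactor $C'_{b,\sigma,T}$ from the one-step/time-regularity terms and the \emph{exponential} Gronwall factor $e^{\kappa_p(1+C'_{b,\sigma,T})^2 T}$ combine into the stated $K(p,b,\sigma,T)$ without losing a power of $C'_{b,\sigma,T}$ in the exponent; this is handled by always estimating $\bar X^n$ (not $X$) in the ``freezing'' term, so that only the linear-growth constant (not its square) multiplies the leading $(T/n)^{\beta\wedge 1/2}$.
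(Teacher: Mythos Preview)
Your sketch is a faithful outline of the classical Euler--Maruyama strong-error argument (continuous interpolant, three-term splitting of the increments, BDG plus Gronwall), and with careful bookkeeping it would indeed recover the stated bound and constant $K(p,b,\sigma,T)$. There is no mathematical gap worth flagging beyond the usual caveat that tracking the precise form $\kappa_p C'_{b,\sigma,T}e^{\kappa_p(1+C'_{b,\sigma,T})^2 T}$ requires some care in how the BDG constants are absorbed.

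However, there is nothing to compare against: the paper does \emph{not} prove this theorem. It is quoted verbatim from Pag\`es \cite{PG}, Theorem~7.2, as one of the three preliminary results listed ``for ready references'' in Section~2, alongside Kallianpur's martingale representation theorem and the Clark--Ocone formula. The paper merely \emph{applies} this theorem later (in the proof of Theorem~\ref{error analysis}) to the SDE \eqref{newpi} with the truncated diffusion coefficient $\bar\sigma$. So your proposal is not ``the paper's approach done differently''; it is a proof of a result the authors deliberately import from the literature rather than reprove.
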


\section{Problem formulation}
%section{Model setup}
\setcounter{equation}{0}

%subsection{The problem driven by inavation process}
\subsection{Model setup}
\numberwithin{equation}{section}

Assume that $(\Omega, \mathcal{F}, P,\{\mathcal{F}_t\}_{t\ge 0})$ is a complete filtered probability space, which represents the financial market. The filtration $\{\mathcal{F}_t\}_{t\ge 0}$ satisfies the usual conditions, and $P$ denotes the probability measure. In this probability space, there exists a standard one-dimensional Brownian motion $W$. The price process of the underlying stock is denoted by $S_t$, $ t \in[0,T]$, which satisfies the stochastic differential equation (SDE):
\begin{eqnarray}\label{drift}
dS_t = \mu S_tdt + S_t dW_t,
\end{eqnarray}
where $\mu$ is random and independent of the Brownian motion $W$, and it may only takes two possible values $a$ and $b$ that satisfy $$\ga := a-b >0.$$
The stock is said to be in its bull trend when $\mu = a$, and in its bear trend when $\mu = b$.

The information up to time $t$ is given by
$$\mathcal{G}_t := \sigma\( S_s: \; s \le t\),\quad t \in[0,T].$$
The \emph{posteriori probability process} $\pi = (\pi_t)_{t \in[0,T]}$ is defined as 
\begin{eqnarray}
\pi_t := P(\mu = a | \cG_t).
\end{eqnarray}
It is used to estimate the chance that the stock is in its bull trend at time $t$. Assume that $0 < \pi_0 <1$. This means it is not clear whether the stock is in its bull trend or bear trend at time $0$. 

Let $u_t$, called a portfolio, be the amount invested in the stock at time $t$.
\begin{definition}
A portfolio (or trading strategy) is called {\bf self-financing} if all the changes of the values of the
portfolio are due to gains or losses realized on investment, that is, no funds are borrowed
or withdrawn from the portfolio at any time. A portfolio $u_{t}$ is called {\bf admissible} if it is $\cG_t$-adapted, self-financing and 
\[\int^T_0\EE(u_t^2)dt<\infty.\]
\end{definition}

Denote by $Y_t$ the wealth process of an agent, and $u_t$ an admissible trading strategy. 
Starting with an initial wealth $y_0>0$, $Y_t$ satisfies the following \emph{wealth equation}:
\begin{align}\label{wealthy01}
\begin{cases}
dY_t = \(\mu u_t + (Y_t - u_t)r \)dt + u_t dW_t, \quad t \in [0,T], \\
Y_0 = y_0.
\end{cases}
\end{align}
where $r$ denotes the interest rate.

Our goal is to solve the following optimization 

{\em Problem (MV)}: To find the best admissible portfolio $u_t$ to 
minimize $\text{Var}(Y_{T})$ subject to the constraint $\EE (Y_{T})= z$, where $Y_t$ is given by 
(\ref{wealthy01}), and $z$ is a given positive number.

Taken as observation, the log-price process $L = (\log S_t)_{t \in[0,T]}$, by It\^{o}'s lemma, satisfies the following SDE
\begin{eqnarray}\label{log}
dL_t = \(\mu - \frac{1}{2} \)dt + dW_t.
\end{eqnarray}
Then, the innovation process
\begin{eqnarray}\label{barW}
\nu_t = L_t - \int_0^t \(b - \frac{1}{2} + \ga \pi_s \) ds
\end{eqnarray}
is a Brownian motion with respect to the observation filtration $\mathcal{G}_t$. It is easy (see \cite{GK}, Chapter 8.1) to verify that $\pi_t$ satisfies the following SDE: 
\begin{align}\label{pi112}
d \pi_t =\ga \pi_t (1 - \pi_t) d\nu_t,\qquad \pi_0=P(\mu=a).
\end{align}
By \eqref{wealthy01} and \eqref{barW}, we get the $\nu_t$-driven representation for $Y$:
\begin{equation}\label{innavation}
\begin{cases}
dY_t = \big(\(b + \ga \pi_t - r\)u_t + r Y_t \big)dt + u_t d\nu_t,\quad t \in[0,T],\\
Y_0 = y_0.
\end{cases}
\end{equation}

\subsection{Optimization under partial information}
\numberwithin{equation}{section}

The optimization problem (MV) turns to minimize $\text{Var}(Y_{T})$ with state equations \eqref{innavation} and the constraint $\EE (Y_{T})= z$.

Let
\begin{eqnarray}\label{rho}
\rho_t := \exp\(- \int_0^t (b-r+\ga \pi_s) d\nu_s -\int_0^t (r+ \frac{1}{2}(b-r+\ga \pi_s)^2)ds \).
\end{eqnarray}
Applying It\^o's formula to $\rho_t$, we get 
\begin{align}\label{rho11}
d\rho_t = -r \rho_t dt - \(b -r +\ga\pi_t\)\rho_t d\nu_t.
\end{align}
Further, applying It\^o's formula to $Y_t\rho_t$, we have 
\[
d(Y_t\rho_t) = \(Y_t\(r\rho_t - \mu \rho_t \) + u_t \rho_t\) d\nu_t.
\]
Therefore, $Y_t\rho_t$ is a $\mathcal{G}_t$-martingale and hence,
\[
\EE (Y_t\rho_t) = y_0.
\]
Denote $Y_T$ by $v$. To find the optimal portfolio, we seek the best $\mathcal{G}_T$-measurable terminal wealth $v$ to minimize the variance 
\begin{eqnarray}\label{opt}
\EE(v -z)^2
\end{eqnarray}
subject to constraints
\begin{eqnarray}
\EE v = z \quad \text{and} \quad \EE(\rho_Tv) = y_0.
\end{eqnarray}

Let $\HH := L^2(\Omega, \mathcal{G}_T, P)$. For $X\in\HH$, let
\[\|X\|_{\HH}:=\(\EE (X^2) \)^{\frac{1}{2}}.\]
Then, $\HH$ is a Hilbert space endowed with the norm $\|\cdot\|_\HH$. Note that 
\[\EE(v -z)^2=\|v-0\|^2_\HH-z^2. \]
Therefore, the optimal $v$ is the projection of $0$ onto the hyperplane 
\[\left\{v\in\HH:\;\EE v=z,\;\EE(v\rho_T)=y_0\right\}.\]

\subsection{Completeness of the market}
\numberwithin{equation}{section}
Denote by $L^2_{\mathcal{G}}(0, T; \RR)$ the set of all $\RR$-valued, $\mathcal{G}_t$-adapted processes $f(t)$ on $[0,T]$ such that \[\EE\int_0^T|f(t)|^2dt <\infty.\] Then $L^2_{\mathcal{G}}(0,T;\RR)$ becomes a Hilbert space endowed with the norm \[\|f\|_{L^2_{\mathcal{G}}(0,T;\RR)}:=\(\EE\int_0^T|f(t)|^2dt \)^{\frac{1}{2}}.\]

\begin{definition}\label{defC}
A contingent claim $v \in \HH$ is called attainable if there is $\Phi_s \in L^2_{\mathcal{G}}(0,T;\mathbb{R})$ such that 
\begin{eqnarray}\label{claim}
v\rho_T = \EE(v\rho_T) + \int_0^T \Phi_s d\nu_s. 
\end{eqnarray}
\end{definition}

Denote the collection of all attainable contingent claims by AC($\mathcal{G}$). Then AC($\mathcal{G}$) is a subspace of $\HH$. Denote by $\HH_0$ the closure of AC($\mathcal{G})$ in $\HH$ under the norm $\|\cdot\|_{\HH}$. 

\begin{definition}
The market is complete if $\HH_0=\mathbb{H}$. 
\end{definition}

The following statement is quite surprise to us. Namely, the market is complete although the information is not complete and the drift is uncertain. It is worth mentioning that completeness was left open in \cite{XZ} for their model. Because of this lacking of completeness result, the authors of \cite{XZ} turn to search the optimal solution $v$ in the space $\HH_0$.

\begin{theorem}\label{Hspace}
The market is complete.
\end{theorem}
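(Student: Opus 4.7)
The plan is to prove $\HH_0=\HH$ by showing that the dense subspace of bounded random variables in $\HH$ already lies in AC($\cG$), and then invoking closedness of $\HH_0$. The argument breaks into three steps.

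First, I would establish that $\rho_T\in L^p(\Omega)$ for every $p\ge 1$. Because $\pi_s\in[0,1]$ almost surely, the coefficient $A_s:=b-r+\ga\pi_s$ appearing in \eqref{rho} is uniformly bounded by some constant $K$. Writing $\rho_T=e^{-rT}\mathcal{E}_T(-A\cdot\nu)$, where $\mathcal{E}$ denotes the Dol\'eans exponential, a routine rewriting gives
\[
\rho_T^p=e^{-prT}\,\mathcal{E}_T(-pA\cdot\nu)\,\exp\!\(\tfrac{p(p-1)}{2}\int_0^T A_s^2\,ds\).
\]
The first factor is a true martingale with mean one (Novikov's criterion is trivial since $A$ is bounded), and the second factor is bounded by $e^{p(p-1)K^2T/2}$, so $\EE\rho_T^p<\infty$.

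Second, for any bounded $v\in\HH$, Cauchy--Schwarz yields $v\rho_T\in L^2(\Omega)$, so $M_t:=\EE(v\rho_T\mid\cG_t)$ is a square-integrable $\cG_t$-martingale. In the notation of Section 2, the log-price equation \eqref{log} fits the framework \eqref{cond1} with signal $h_u=\mu-\tfrac{1}{2}$ and observation $Z=L$, so that $\cG_t=\mathscr{F}^L_t=\mathscr{F}^Z_t$; condition \eqref{cond4} is immediate since $\mu\in\{a,b\}$ is bounded. Theorem \ref{thm2.1} therefore provides $\Phi\in L^2_{\cG}(0,T;\RR)$ with
\[
M_t=\EE(v\rho_T)+\int_0^t\Phi_s\,d\nu_s,\qquad t\in[0,T],
\]
and evaluating at $t=T$ shows that every bounded $v\in\HH$ belongs to AC($\cG$).

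Finally, for a general $v\in\HH$, I would use the truncations $v_n:=v\,\mathbf{1}_{\{|v|\le n\}}$. Each $v_n$ is bounded and hence lies in AC($\cG$) by the previous step, while $v_n\to v$ in $\HH$ by dominated convergence with dominating function $v^2$. Since AC($\cG$) is contained in $\HH_0$ and $\HH_0$ is closed by construction, this forces $v\in\HH_0$, giving $\HH_0=\HH$. The only substantive obstacle is the $L^p$-integrability of $\rho_T$ needed to reduce arbitrary square-integrable claims to bounded ones; once this is in hand, Theorem \ref{thm2.1} together with the truncation argument finishes the proof.
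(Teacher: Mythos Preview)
Your proof is correct and follows essentially the same route as the paper: establish $\rho_T\in L^p$ for all $p$, apply the martingale representation Theorem~\ref{thm2.1} to a dense class of claims, and then close up in $\HH$. Your indicator truncation $v_n=v\,\mathbf{1}_{\{|v|\le n\}}$ is in fact slightly simpler than the paper's choice $V_n=V\min\{|V|^{-1/n},1\}$, which only lands $V_n$ in $L^{2+1/n}$ and then requires a H\"older step to get $V_n\rho_T\in L^2$; your boundedness of $v_n$ gives this directly from $\rho_T\in L^2$.
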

\begin{proof}
Since $\HH_0\subseteq \HH$, it suffices to show $\HH\subseteq \HH_{0}$.
For any $V \in \HH$, let $V_n = V\min\{|V|^{-\frac{1}{n}},1\}$. Then 
\[(V_{n}-V)^{2}= V^{2}\mathbf{1}_{|V|>1}(|V|^{-\frac{1}{n}}-1)^{2}\leq V^{2}\mathbf{1}_{|V|>1} \leq V^{2}.\]
Since $V \in \HH$, we have $\EE|V|^2 <\infty$. By the dominated convergent theorem, 
\[\lim_{n\to\infty}\|V_{n}-V\|^{2}_{\HH}=\lim_{n\to\infty}\EE[(V_{n}-V)^{2}]=\EE\Big[\lim_{n\to\infty} V^{2}\mathbf{1}_{|V|>1}(|V|^{-\frac{1}{n}}-1)^{2}\Big]=0.\]
Therefore, if we can show $V_n \in AC(\mathcal{G})$, then $V$ is in the closure of $AC(\cG)$ under the normal $\|\cdot\|_{\HH}$, namely $V\in \HH_0$, and the claim follows.
\par
We now show $V_n \in AC(\mathcal{G})$ for any $n\geq 1$. 	Notice
\[\EE|V_n|^{2+\frac{1}{n}} = \EE\big[|V|^{(1-\frac{1}{n})(2+\frac{1}{n})}\mathbf{1}_{|V|>1}+|V|^{2+\frac{1}{n}}\mathbf{1}_{|V|\leq 1}\big]\leq \EE (|V|^2\mathbf{1}_{|V|>1}+1) <\infty,\]
so $V_n \in L^{2+\frac{1}{n}}$.
By H\"{o}lder's inequality, we have
\[\EE|V_n \rho_T|^2 \le \(\EE |V_n|^{2\frac{2n+1}{2n}} \)^{\frac{2n}{2n+1}} \(\EE \rho_T^{2(2n+1)} \)^{\frac{1}{2n+1}}<\infty,\]
as $\EE \rho_T^p < \infty$, for all $ p>1$. Hence $\EE (V_n\rho_T| \mathcal{G}_t)$ is a square integrable martingale. By Theorem \ref{thm2.1}, we have 
\begin{eqnarray} 
\EE (V_n\rho_T| \mathcal{G}_t) - \EE (V_n\rho_0) = \int_0^t \Phi_s d\nu_s,\quad t\in[0,T],
\end{eqnarray}
for some $\Phi_s \in L^2_{\mathcal{G}}(0,T;\mathbb{R})$. 
When $t = T$, since $V_n \rho_T$ is $\mathcal{G}_T$ measurable, the above equation reduces to 
\begin{eqnarray}
V_n\rho_T- \EE (V_n\rho_0) = \int_0^T \Phi_s d\nu_s,
\end{eqnarray}
which implies $V_n \in AC(\mathcal{G})$. 
\end{proof}

 It was shown in \cite{XZ} that the optimal terminal $v$ to the optimization problem for the model in \cite{XZ} is given by 
%\emr{It may be better to write $\HH_0$ as in \cite{XZ} in the following formula}
\begin{eqnarray}\label{solution}
v = \frac{(z\langle \beta, \beta\rangle_{\mathbb{H}} - x_0\langle \alpha, \beta \rangle_{\mathbb{H}})\alpha + (-z \langle \alpha, \beta \rangle_{\mathbb{H}} + x_0 \langle \alpha, \alpha \rangle_{\mathbb{H}})\beta}{\langle \alpha, \alpha \rangle_{\mathbb{H}} \langle \beta, \beta \rangle_{\mathbb{H}}- \langle \alpha, \beta \rangle^2_{\mathbb{H}}},
\end{eqnarray}
where $\alpha, \beta$ are the orthogonal projections on $\mathbb{H}_0$ of $1$ and $\rho_T$, respectively.

A numerical scheme was given in \cite{XZ} under the completeness assumption. Although that numerical scheme can be extended to the current model, we will propose a more efficient numerical scheme for our model, which is one of the main contributions of the current article.

%\subsection{Replicate $v$ and find the optimal strategy}
\subsection{Find the optimal strategy }
\numberwithin{equation}{section}

Similar to \cite{XZ}, the terminal problem can be solved as follows.
\begin{lemma}
The optimal terminal wealth for the problem \eqref{opt} is
\begin{eqnarray}\label{strategy}
v = \frac{z\EE(\rho_T^2)- y_0\EE\rho_T + (y_0-z\EE\rho_T )\rho_T}{\mathrm{Var} (\rho_T)},
\end{eqnarray}
where $\rho_T$ is given by \eqref{rho}.
\end{lemma}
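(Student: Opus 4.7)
The plan is to recognize this as a classical Hilbert-space projection problem and exploit market completeness to reduce it to a two-dimensional linear system. Under the constraint $\EE v = z$ we have $\EE(v-z)^2 = \EE v^2 - z^2 = \|v\|_\HH^2 - z^2$, so minimizing $\EE(v-z)^2$ over the affine subspace
\[
\cV := \{v \in \HH : \EE v = z,\ \EE(\rho_T v) = y_0\}
\]
is equivalent to finding the element of $\cV$ with smallest $\HH$-norm, i.e. the orthogonal projection of $0$ onto $\cV$.

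First, I would note that by Theorem \ref{Hspace} the market is complete, so $\HH_0 = \HH$ and in particular $1, \rho_T \in \HH$. This means the formula \eqref{solution} from \cite{XZ} applies with the orthogonal projections $\alpha$ of $1$ and $\beta$ of $\rho_T$ onto $\HH_0$ being simply $\alpha = 1$ and $\beta = \rho_T$. Substituting these into \eqref{solution} and using the inner products
\[
\langle \alpha, \alpha \rangle_\HH = 1, \qquad \langle \alpha, \beta \rangle_\HH = \EE \rho_T, \qquad \langle \beta, \beta \rangle_\HH = \EE \rho_T^2,
\]
the denominator becomes $\EE\rho_T^2 - (\EE\rho_T)^2 = \mathrm{Var}(\rho_T)$ and direct algebraic simplification of the numerator yields exactly \eqref{strategy}.

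Alternatively, a self-contained derivation proceeds via Lagrange multipliers in $\HH$. The orthogonality conditions for the minimizer force $v$ to be orthogonal to the homogeneous subspace $\{w \in \HH : \EE w = 0,\ \EE(\rho_T w) = 0\}$, hence $v$ must lie in the span of the two constraint functionals' Riesz representers, namely $v = \lambda_1 + \lambda_2 \rho_T$ for some constants $\lambda_1, \lambda_2 \in \RR$. Imposing the two affine constraints $\EE v = z$ and $\EE(\rho_T v) = y_0$ yields the $2\times 2$ linear system
\begin{align*}
\lambda_1 + \lambda_2 \EE \rho_T &= z, \\
\lambda_1 \EE\rho_T + \lambda_2 \EE\rho_T^2 &= y_0,
\end{align*}
whose determinant is $\mathrm{Var}(\rho_T)$. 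Solving by Cramer's rule and collecting terms gives $\lambda_1 = (z\EE\rho_T^2 - y_0 \EE\rho_T)/\mathrm{Var}(\rho_T)$ and $\lambda_2 = (y_0 - z\EE\rho_T)/\mathrm{Var}(\rho_T)$, which reassembles into \eqref{strategy}.

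The only genuine subtlety is ensuring that $\mathrm{Var}(\rho_T) > 0$, so that the system is invertible and $v$ is well defined; this follows from $\gamma > 0$ and $0 < \pi_0 < 1$, which makes $\rho_T$ non-degenerate via \eqref{rho}--\eqref{rho11}. No deep obstacle is expected: the result is a direct application of the completeness established in Theorem \ref{Hspace} together with finite-dimensional linear algebra applied to two linear constraints.
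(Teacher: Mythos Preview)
Your proposal is correct and follows essentially the same approach as the paper: the paper's argument (given as a remark rather than a formal proof) simply invokes completeness to conclude $\alpha = 1$ and $\beta = \rho_T$, then substitutes into the formula \eqref{solution} from \cite{XZ}. You carry out this substitution explicitly and also supply a clean self-contained Lagrange-multiplier derivation, both of which are valid.
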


After finding the optimal terminal wealth, we then seek a portfolio to realize it. Namely, for
 $v$ given by (\ref{strategy}), we need to find a solution of the following BSDE:
\begin{eqnarray} \label{BSDE11}
\begin{cases}
dY_t = \big(\(b + \ga \pi_t - r\)u_t + r Y_t \big)dt + u_t d\nu_t, \quad t\in[0,T],\\
Y_T = v.
\end{cases}
\end{eqnarray}

\begin{theorem}\label{thm:optimalcontrol}
The optimal portfolio is given by
\begin{eqnarray}\label{eq0903a}
u_t = (b-r +\ga \pi_t)Y_t + \rho^{-1}_t \eta_t,
\end{eqnarray}
where $\eta_t \in L_{\cG}^2(0,T;\RR^{d})$ is uniquely determined by the martingale representation theorem
\begin{eqnarray} \label{eta}
\EE(\theta|\cG_t) = \EE( \theta)+ \int_0^t \eta_sd \nu_s, \quad \quad \forall t\in [0,T],
\end{eqnarray} 
and $\theta = \rho_T Y_T$.
\end{theorem}
\begin{proof} 
As seen from the arguments above, we need to seek a solution to the following forward-backward SDE:
\begin{equation}
\begin{cases}\label{BSDE}
dY_t = \big((b +\ga \pi_t - r)u_t + r Y_t \big)dt + u_t d\nu_t,\quad Y_0=y_0,\\
d \pi_t = \ga \pi_t (1 - \pi_t) d\nu_t,\\
d\rho_t = -r \rho_t dt - (b -r +\ga\pi_t)\rho_t d\nu_t, \\
\rho_0 = 1, \quad \pi_0=c_0, \quad Y_T = c_1 + c_2\rho_T,
\end{cases}
\end{equation}
where $c_0=P(\mu=a)$, $c_1=\frac{z\EE \rho_T^2-y_0\EE\rho_T}{\mathrm{Var} (\rho_T)}$ and $c_2=\frac{y_0-z\EE\rho_T}{\mathrm{Var} (\rho_T)}$ are known constants.

To prove the invertibility of $\rho_t$, we define $\Phi_t$ by the following SDE:
\begin{equation}
\begin{cases}\label{SDE1}
d\Phi_t = (r + (b-r+\ga \pi_t)^2)\Phi_t dt + (b-r+\ga \pi_t)\Phi_t d\nu_t,\\
\Phi_0 = 1.
\end{cases}
\end{equation}
Apply It\^o's formula to $\rho_t\Phi_t$, we have
\[d(\rho_t\Phi_{t})=0,\]
and hence, $\rho_t\Phi_t\equiv\rho_0\Phi_0=1$. Since $\rho_t Y_t$ is a martingale, then 
\begin{eqnarray}\label{eq0903b}
Y_t = \rho^{-1}_t \EE(\rho_T Y_T | \cG_t)=\rho^{-1}_t \EE(\theta | \cG_t)=\Phi_t\EE(\theta | \cG_t).
\end{eqnarray}
Applying It\^o's formula to (\ref{eq0903b}) and comparing the result with (\ref{BSDE}), we obtain
 the expression (\ref{eq0903a}) for the optimal portfolio.
\end{proof}

Finding the numerical solution $(u_t,Y_t,\pi_t,\rho_t)$ of the FBSDE  (\ref{BSDE}) is the object of the next section.

\section{Numerical schemes based on Malliavin calculus}
\setcounter{equation}{0}

Pardoux and Peng  \cite{PP} obtained the unique solvability results for the nonlinear BSDEs in 1990. Since then a growing literature investigates the numerical methods for BSDEs (\cite{MPJ}, \cite{DJ}, \cite{BA}, \cite{MJ}, \cite{MP}, \cite{ZZ}, \cite{BG}, \cite{BG2}, \cite{GLW}, \cite{JZ}). In \cite{BET}, the Malliavin calculus approach and Monte Carlo approximation are employed to study the conditional expectation, in the Ph.D. thesis \cite{ZJD} of Zhang, some fine properties of the BSDEs by using Malliavin derivatives under weaker conditions were also studied. However, those works mentioned above are based on the setting that the drift coefficients of the BSDEs are deterministic. We can not apply these numerical schemes to our model directly.

In Xiong and Zhou's \cite{XZ} model, the coefficients of $u_t$ and $Y_t$ which appear in the drift term are random in general. They proposed a numerical approximation $(u^n_t,Y^n_t)$ to the solution $(u_t,Y_t)$ to that kind of BSDE with random coefficients. However, due to technical difficulty, only the convergence of $Y^n_t$ to the wealth process $Y_t$ is proved, and leave the convergence problem of the portfolio unsolved. The rate of convergence of $Y^n_t$ to $Y_t$ is not established in that paper.

In this section, we propose an efficient numerical scheme for the BSDE \eqref{BSDE11} whose terminal value $v$ takes the form $c_1 + c_2\rho_T$, where $c_1, c_2$ are constants and $\rho_t$ is a diffusion process which is Malliavin differentiable (see Theorem \ref{thm_malliavin} for detailed calculation). With the help of Malliavin calculus, we prove that our scheme for the portfolio and the wealth processes converge in the strong $L^2$ sense and derive the rate of convergence.

Denote $N(t):= \EE (\theta| \mathcal{G}_t)$. We note that the main complexity in Xiong and Zhou's \cite{XZ} numerical scheme for BSDEs results from the approximation of the integrand $\eta_t$ in \eqref{eta}, which is difficult to calculate directly. They use the following procedure to approximate $\eta_t$: 
First they divide $[0,t]$ into $n_1$ subintervals and approximate the quadratic covariation process
\begin{align*}
A_t:= \langle N, \nu \rangle_t = \int_0^t \eta_sds
\end{align*}
by the discrete version over the partition points. They further divide each subinterval mentioned above into $n_2$ smaller ones and obtain an approximation of $\eta_s,\;\ s\le t$. This procedure is not computationally efficient because the double-partition increases error dramatically. This will be seen from the numerical  examples in the subsequent section. 

In order to overcome the aforementioned drawback of the above numerical scheme, we turn to use the Clark-Ocone formula from Malliavin calculus to get an explicit expression of $\eta_t$. In fact, it will be the conditional expectation of a Malliavin derivative. Our numerical scheme will be based on this representation.

\begin{theorem}\label{thm_malliavin}
We can represent $\eta_t$ as $\EE \(D_t \theta | \cG_t \)$ where $D_t$ is the Malliavin derivative operator. Further,
\begin{equation}\label{eq0903c}
D_t\theta=(c_1+2c_2 \rho_T)D_t\rho_T 
\end{equation}
and $D_t\rho_T$ is given by
\begin{equation}\label{eq0903d}
D_t\rho_T=\rho_T\left[-\int_t^T\ga(b-r+\ga \pi_s)D_t\pi_s ds-(b-r+\ga \pi_t)+\int_t^T\ga D_t \pi_s d\nu_s\right],
\end{equation}
with
\begin{align}\label{eq0903e}
D_t\pi_s=\ga\pi_t(1-\pi_t)\exp\left(\int_t^s\ga (1-\pi_r)d\nu_r-\frac{1}{2}\int_t^s\ga^2 (1- 2\pi_r)^2dr\right).
\end{align}
\end{theorem}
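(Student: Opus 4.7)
The plan is to peel off the three identities in the reverse order to which they appear: first derive the explicit form of $D_t\pi_s$, then plug it into the formula for $D_t\rho_T$, then combine these with the Malliavin chain rule applied to $\theta=c_1\rho_T+c_2\rho_T^2$, and finally invoke the Clark-Ocone formula to identify $\eta_t$. For that last step, observe that $\theta$ is $\cG_T$-measurable and the innovation $\nu$ is a $(\cG_t)$-Brownian motion, so once $\theta\in\mathbb{D}^{1,2}$ is established, Clark-Ocone gives $\theta=\EE\theta+\int_0^T\EE(D_t\theta\mid\cG_t)d\nu_t$, and matching this against \eqref{eta} evaluated at $t=T$ via the uniqueness of the Brownian representation from Theorem \ref{thm2.1} pins down $\eta_t=\EE(D_t\theta\mid\cG_t)$. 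The chain/product rule for Malliavin derivatives then gives \eqref{eq0903c}: $D_t\theta=c_1D_t\rho_T+2c_2\rho_T D_t\rho_T=(c_1+2c_2\rho_T)D_t\rho_T$.

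For \eqref{eq0903e}, applying $D_t$ to the SDE $d\pi_s=\ga\pi_s(1-\pi_s)d\nu_s$ for $s\ge t$ (with $D_t\pi_s\equiv 0$ for $s<t$) and using the standard trace term $\sigma(t,\pi_t)=\ga\pi_t(1-\pi_t)$ would yield the linear SDE
\[
D_t\pi_s=\ga\pi_t(1-\pi_t)+\int_t^s\ga(1-2\pi_u)D_t\pi_u\,d\nu_u,
\]
whose Dol\'eans-Dade exponential solution is precisely \eqref{eq0903e}. For \eqref{eq0903d}, the cleanest route is to Malliavin-differentiate the logarithmic form of \eqref{rho}: writing $\log\rho_T=-\int_0^T(b-r+\ga\pi_s)d\nu_s-\int_0^T\bigl(r+\tfrac12(b-r+\ga\pi_s)^2\bigr)ds$ and using the standard interchange rules
\[
D_t\int_0^T f_s\,d\nu_s=f_t+\int_t^T D_tf_s\,d\nu_s,\qquad D_t\int_0^T g_s\,ds=\int_t^T D_tg_s\,ds,
\]
with $f_s=b-r+\ga\pi_s$ (so $D_tf_s=\ga D_t\pi_s$) and $g_s=r+\tfrac12(b-r+\ga\pi_s)^2$ (so $D_tg_s=\ga(b-r+\ga\pi_s)D_t\pi_s$), and then applying the chain rule $D_t\rho_T=\rho_T\, D_t\log\rho_T$, yields \eqref{eq0903d} after straightforward bookkeeping of signs.

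The main obstacle is the technical one of verifying that $\pi_s,\rho_T\in\mathbb{D}^{1,2}$ and that the interchange rules above are legitimate, so that the symbolic computations are actually meaningful. For $\pi$, the smoothness of $x\mapsto \ga x(1-x)$ together with the fact that $\pi$ is confined to $[0,1]$ allow the standard Malliavin-differentiability theorems for SDE solutions (see Nualart and Nualart \cite{Nua}) to apply directly, giving $\pi_s\in\mathbb{D}^{1,2}$ and the linear SDE above. For $\rho$, the moment bound $\EE\rho_T^p<\infty$ for all $p>1$ (already invoked in the proof of Theorem \ref{Hspace}) combined with the Malliavin chain rule applied to the exponential representation gives $\rho_T\in\mathbb{D}^{1,2}$, and hence $\theta=c_1\rho_T+c_2\rho_T^2$ inherits Malliavin differentiability so that the Clark-Ocone argument of the first paragraph becomes rigorous.
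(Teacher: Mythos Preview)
Your proposal is correct and follows essentially the same route as the paper: derive the linear SDE for $D_t\pi_s$ from \eqref{pi112} and solve it as a stochastic exponential, Malliavin-differentiate the logarithmic form of $\rho_T$ to obtain \eqref{eq0903d}, apply the chain rule to $\theta=c_1\rho_T+c_2\rho_T^2$ for \eqref{eq0903c}, and finish with Clark--Ocone to identify $\eta_t$. Your version is in fact more careful than the paper's, which does not explicitly justify $\pi_s,\rho_T\in\mathbb{D}^{1,2}$ or the interchange rules you spell out.
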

\begin{proof}
Note that 
\[
\theta=\rho_T Y_T=c_1\rho_T+c_2\rho_T^2,
\]
and hence, (\ref{eq0903c}) follows by applying the Malliavin derivative on both sides.

As
\[
\rho_T=\exp\left(-\int_0^T[r+\frac{1}{2}(b-r+\ga \pi_s)^2]ds-\int_0^T(b-r+\ga \pi_s)d\nu_s\right),
\]
a direct calculation yields (\ref{eq0903d}).

Applying Malliavin derivative to both sides of the integral form of the identity (\ref{pi112}), we get
\begin{align}\label{linearSDE}
D_t\pi_s=\ga\pi_t(1-\pi_t)+\int_t^s\ga (1-2\pi_r)D_t\pi_r d\nu_r.
\end{align}
Then, (\ref{eq0903e}) follows by solving the linear SDE (\ref{linearSDE}). Finally, (\ref{eq0903c}) follows from the Clark-Ocone formula given in Section 2.
\end{proof}
\begin{remark}
Our method is based on the Malliavin  differentiability of $\rho_T$ so that the solution of \eqref{BSDE11} can be represented explicitly. In this paper, our setting is a drift uncertainty model with $\mu$ only takes two values, nevertheless,  the whole analysis can be generalized to a model with several risky assets, where the drift vector takes finite states, under which assumption we can still deduce the Malliavin differentiability of $\rho_T$.
\end{remark}

\subsection{A numerical scheme and its analysis}
Based on Theorem \ref{thm_malliavin}, it is easy to show that
\[\eta_t = \EE \(D_t \theta | \cG_t \):= N_1(t) +\ga N_2(t) + \ga N_3(t),\]
with $N_j(t)=\EE\(I_j|\cG_t\)$, $j=1,2,3$,
where
\begin{align}\label{I1}
I_1 &=-(c_1\rho_T+2c_2 \rho_T^2) (b-r+\ga \pi_t), \\
\label{I2}
I_2 &= (c_1\rho_T+2c_2 \rho_T^2)\int_t^T D_t \pi_s d\nu_s,\\
\label{I3}
I_3 &=-(c_1\rho_T+2c_2 \rho_T^2) \int_t^T (b-r+\ga \pi_s)D_t\pi_s ds,
\end{align}
and $D_t\pi_s$ is given by (\ref{eq0903e}).

As in the proof of Theorem \ref{thm:optimalcontrol}, the key to solve the optimal portfolio is the martingale representation of the $\cG_t$-martingale $\EE(\theta|\cG_t)$. We will establish particle representation for this martingale.

The solution of \eqref{rho11} is given by 
\begin{eqnarray}
\rho_t = \exp\(- \int_0^t (b-r+\ga \pi_s) d\nu_s -\int_0^t (r+ \frac{1}{2}(b-r+\ga \pi_s)^2)ds \),
\end{eqnarray}
Denote $\tilde{\rho}_t := \log \rho_t$, then
\begin{eqnarray}\label{logg}
d\tilde{\rho}_t =- (b-r+\ga \pi_t) d\nu_t -(r+ \frac{1}{2}(b-r+\ga \pi_t)^2)dt.
\end{eqnarray}

To approximate $\EE(\pi_t|\cG_{t'})$, we use the conditional SLLN such that $\pi^i_t$ is given by (\ref{pi112}) with $\nu_s$ be replaced by $\nu^i_s$ for $s\ge t'$, where $\nu^i$, $i=1,2,\cdots$ are independent copies of $\nu$. More precisely, we define the following processes $\pi^i(t,t')$ with two time-indices as follows: For $t\le t'$, $\pi^i(t,t')=\pi_t$, and for $t\ge t'$,
\begin{align}\
d \pi^i(t,t') =\ga \pi^i(t,t')(1 - \pi^i(t,t')) d\nu^i_t,\qquad \pi^i(t',t')=\pi(t').
\end{align}

To approximate $\EE(\rho_t|\cG_{t'})$, we use $\EE(\exp(\tilde{\rho}_t)|\cG_{t'})$ instead. For $t\le t'$, $\tilde{\rho}^i(t,t')=\tilde{\rho}_t$, and for $t\ge t'$,
\begin{align}\label{pirho12}
d\tilde{\rho}^i(t,t') =- (b-r+\ga \pi^i(t,t')) d\nu^i_t -(r+ \frac{1}{2}(b-r+\ga \pi^i(t,t'))^2)dt,\qquad \tilde{\rho}^i(t',t')=\tilde{\rho}(t').
\end{align}

By conditional SLLN, we can easily prove the following identities.
\begin{proposition} Denote $\rho^i (T,t)= \exp(\tilde{\rho}^i (T,t))$, we have
\begin{align*}
N_1(t) &= -( b-r + \ga \pi_t) 
\lim_{m \to \infty}\frac{1}{m} \sum_{i=1}^{m} ( c_1 \rho^i(T, t) + 2c_2 (\rho^i(T, t))^2), \\
N_2(t) &= \lim_{m \to \infty}\frac{1}{m} \sum_{i=1}^{m} ( c_1 \rho^i(T, t) + 2c_2 (\rho^i(T, t))^2) \int_t^T D_t \pi^i(s,t) d\nu^i_s, \\
N_3(t) &= \lim_{m \to \infty}\frac{1}{m} \sum_{i=1}^{m} - ( c_1 \rho^i(T, t) + 2c_2 (\rho^i(T, t))^2) \int_t^T (b-r+\ga \pi^i(s,t)) D_t \pi^i(s,t) ds. 
\end{align*}
\end{proposition}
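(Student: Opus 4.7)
The plan is to deduce all three identities from a single application of the conditional strong law of large numbers, using the fact that, after conditioning on $\cG_t$, the three sequences
$\{c_1\rho^i(T,t)+2c_2(\rho^i(T,t))^2\}_i$,
$\{(c_1\rho^i(T,t)+2c_2(\rho^i(T,t))^2)\int_t^T D_t\pi^i(s,t)\,d\nu^i_s\}_i$,
and $\{-(c_1\rho^i(T,t)+2c_2(\rho^i(T,t))^2)\int_t^T(b-r+\ga\pi^i(s,t))D_t\pi^i(s,t)\,ds\}_i$
are iid with conditional means equal to $\EE(I_1|\cG_t)/(-(b-r+\ga\pi_t))$, $\EE(I_2|\cG_t)$, and $\EE(I_3|\cG_t)$, respectively.

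First I would verify the conditional-law matching. Because $\nu$ is a $\cG_t$-Brownian motion, the increment $(\nu_s-\nu_t)_{s\in[t,T]}$ is independent of $\cG_t$; by construction each $\nu^i$ is an independent copy of $\nu$, so $(\nu^i_s-\nu^i_t)_{s\in[t,T]}$ plays the same role as the true increment conditionally on $\cG_t$. Since $\pi^i(\cdot,t)$ and $L\rho^i(\cdot,t)$ solve the same SDEs as $\pi$ and $L\rho$ on $[t,T]$, started from the $\cG_t$-measurable values $\pi_t$ and $L\rho_t$, strong uniqueness for these SDEs gives that the conditional $\cG_t$-law of $(\pi^i(\cdot,t),L\rho^i(\cdot,t))$ on $[t,T]$ matches that of $(\pi,L\rho)$ on $[t,T]$. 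Interpreting $D_t\pi^i(s,t)$ as the explicit formula of Theorem \ref{thm_malliavin} with $\nu$ and $\pi$ replaced by $\nu^i$ and $\pi^i(\cdot,t)$, the conditional law match extends to the joint tuple $(\rho^i(T,t),\int_t^T D_t\pi^i(s,t)\,d\nu^i_s,\int_t^T(b-r+\ga\pi^i(s,t))D_t\pi^i(s,t)\,ds)$. Pulling the $\cG_t$-measurable factor $b-r+\ga\pi_t$ out of $N_1(t)$, the conditional SLLN then yields each of the three claimed limits.

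The main technical obstacle is supplying the integrability required for Kolmogorov's conditional SLLN. Since $\pi^i(s,t)\in[0,1]$, the exponent defining $\rho^i(T,t)$ is the sum of a bounded drift and a stochastic integral with bounded integrand, so $\rho^i(T,t)$ has finite moments of all orders, uniformly in $i$. An analogous bound applies to $D_t\pi^i(s,t)$ via its explicit exponential form, because $\ga(1-\pi_r)$ and $\ga^2(1-2\pi_r)^2$ are bounded on $[0,1]$. Combining these moment bounds with the Cauchy--Schwarz inequality and Itô's isometry to control the $d\nu^i_s$ integral shows that each of the three summands lies in $L^2$ uniformly in $i$, which is more than enough for a.s. convergence to the conditional expectation. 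No other obstacle arises; the proposition follows once the conditional iid structure and the uniform $L^1$ bound are established.
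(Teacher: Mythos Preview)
Your proposal is correct and follows exactly the approach the paper intends: the paper simply states ``By conditional SLLN, we can easily prove the following identities'' without further detail, and your argument supplies precisely the conditional-iid structure and moment bounds that make that invocation rigorous. There is no meaningful difference in route, only in level of detail.
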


In order to approximate $N_k(t),(k=1,2,3)$, we use the discrete Euler scheme to approximate $\pi_t$. For notation simplicity, from now on we assume $T = 1$. Then, we discrete the time interval $[0,1]$ into $n$ small intervals and let $\delta = \frac{1}{n}$.

Note that the diffusion coefficient in the SDE \eqref{pi112} is $\sigma(x) = \ga x(1-x)$, which does not satisfy the global Lipschitz condition \eqref{Hbeta2}. To overcome this hurdle, we define $\bar{\sigma}(x)$ as following
\begin{align*}
\bar{\sigma}(x) = 
\begin{cases}
\ga x(1 - x), \quad & x\in [0,1], \\
0, \quad & x \notin [0,1].
\end{cases}
\end{align*}
Using the fact that $\pi_{t}\in[0,1]$ for all $t\in[0,T]$, it is easy to see that $\pi_{t}$ is a solution of the following SDE 
\begin{align}\label{newpi}
d {\pi}_t =\bar{\sigma}( {\pi}_{t}) d\nu_t.
\end{align}
This SDE satisfies the global Lipschitz condition \eqref{Hbeta2}, so $\pi_{t}$ is the unique solution. Therefore, we approximate $\pi_t$ by applying the Euler scheme to \eqref{newpi} instead of the SDE \eqref{pi112}.

Firstly, we define $\pi^{i, \delta}(t, t'), t,t' \ge 0,$ in two steps.

For $l \le k$, let
\begin{align}
\pi^{\delta}(l\delta, k\delta) &:= \pi^{\delta}\((l-1)\delta, k\delta\)+ \bar{\sigma}(\pi^{\delta}\((l-1)\delta, k\delta\)) \(\nu_{l\delta} - \nu_{(l-1)\delta}\) \nonumber
\end{align}
with $\pi^{\delta}(0, k\delta) := c$ (c is a constant in $[0,1]$); for $l > k$, let
\begin{align}
\pi^{i,\delta}(l\delta, k\delta) &:= \pi^{i,\delta}\((l-1)\delta, k\delta\)+ \bar{\sigma}(\pi^{i, \delta}\((l-1)\delta, k\delta\))\(\nu^i_{l\delta} - \nu^i_{(l-1)\delta}\). \nonumber
\end{align}

For $l \le k$, let
\begin{align}
\rho^{\delta}(l\delta, k\delta) &:= \exp \left( \tilde{\rho}^{\delta}((l-1)\delta, k\delta) -\delta (r+ \frac{1}{2}(b-r+\ga \pi^{\delta}((l-1)\delta, k\delta))^2) \right. \nonumber \\
&\left. \quad\ - (b-r+\ga \pi^{\delta}((l-1)\delta, k\delta)) \(\nu_{l\delta} - \nu_{(l-1)\delta}\) \right );
\end{align}
for $l > k$, let
\begin{align}
\rho^{i,\delta}(l\delta, k\delta) &:= \exp \left( \tilde{\rho}^{i,\delta}((l-1)\delta, k\delta) -\delta (r+ \frac{1}{2}(b-r+\ga \pi^{i,\delta}((l-1)\delta, k\delta))^2) \right. \nonumber \\
&\left. \quad\ - (b-r+\ga \pi^{i,\delta}((l-1)\delta, k\delta)) \(\nu^i_{l\delta} - \nu^i_{(l-1)\delta}\) \right),
\end{align}
with $\tilde{\rho}^{\delta}_0 = 0$.

Similarly, denote $\tilde{\Phi}_t = \log \Phi_t$, and let 
\begin{align}
\tilde{\Phi}^{\delta}_{k\delta} &:= \tilde{\Phi}^{\delta}_{(k-1)\delta} + \delta \(r+ \frac{1}{2} \(b-r+\ga\pi^{\delta}((k-1)\delta, k\delta)\)^2\) \nonumber \\
&\quad\; + \(b-r+\ga\pi^{\delta}\((k-1)\delta, k\delta\)\) \(\nu_{k\delta} - \nu_{(k-1)\delta}\), \nonumber
\end{align}
with $\tilde{\Phi}^{\delta}_0 = 0$. Then $\Phi_{k\delta}^{\delta} = \exp (\tilde{\Phi}_{k\delta}^{\delta})$.

Next we approximate $N_i(t)$ by $N_i^{m, \delta}(k\delta)$, ($i = 1,2,3;$ $m$ is related to the SLLN, which will be chosen later). For all $s \in [t, T], t \in [0, T]$, let $k = [nt]$, $j = [ns]$. Then $t \in [k\delta,(k+1)\delta)$ and $s \in [j\delta,(j+1)\delta)$. We define $N_i^{m,\delta}(k\delta),(i =1,2,3)$ as follows:
\begin{align}
N_1^{m,\delta}(k\delta) &= -\( b-r + \ga \pi^{\delta}(k\delta) \) 
\frac{1}{m} \sum_{i=1}^{m} \( c_1 \rho^{i,\delta}(T, k\delta) + 2c_2 (\rho^{i,\delta}(T, k\delta))^2 \), \nonumber\\
%N_2^{m,\delta}(u) &= \frac{1}{m} \sum_{i=1}^{m} \( c_1 \rho^{i,\delta}(T, k\delta) + 2c_2 (\rho^{i,\delta}(T, k\delta))^2\) \int_u^T D_u \pi^i(s,u) d\nu^i(s ),\\
N_2^{m,\delta}(k\delta) &= \frac{1}{m} \sum_{i=1}^{m} \( c_1 \rho^{i,\delta}(T, k\delta) + 2c_2 (\rho^{i,\delta}(T, k\delta))^2\) S_2^{i,\delta}(T, k\delta), \nonumber\\
N_3^{m,\delta}(k\delta) &= \frac{1}{m} \sum_{i=1}^{m} -\( c_1 \rho^{i,\delta}(T, k\delta) + 2c_2 (\rho^{i,\delta}(T, k\delta))^2\) S_3^{i,\delta}(T, k\delta) , \nonumber
\end{align}
where 
\begin{align*}
S_2^{i,\delta}(T, k\delta) &= \sum_{l = 1}^{n-k} D_{k \delta}\pi^{i,\delta}((l + k -1) \delta, k\delta) \(\nu^{i}_{l\delta} - \nu^{i}_{(l-1)\delta}\), \\
S_3^{i,\delta}(T, k\delta) &= \sum_{l = 1}^{n-k} \delta(b-r+\ga\pi^{i,\delta}((l + k -1) \delta, k\delta) ) D_{k \delta}\pi^{i,\delta}((l + k -1) \delta, k\delta).
\end{align*}

In the above, $D_{k\delta}\pi^{i,\delta}(j\delta,k\delta), (j = k,\cdots,n-1)$ are still stochastic integrals. By (\ref{linearSDE}), we define $D_{k\delta}\pi^{i,\delta}(j\delta,k\delta)$ only in one step.
Namely, for $j \ge k$, we define
\begin{align}
D_{k\delta}\pi^{i,\delta}(j\delta,k\delta) &:= D_{k\delta}\pi^{i,\delta}((j-1)\delta,k\delta) \nonumber \\
&\quad\; + \ga \(1 - 2 \pi^{i,\delta}((j-1)\delta,k\delta) \)D_{k\delta}\pi^{i,\delta}((j-1)\delta,k\delta)\(\nu^i_{j\delta} - \nu^i_{(j-1)\delta}\) \nonumber
\end{align}
with $D_{k\delta}\pi^{i,\delta}(k\delta,k\delta) =\ga \pi_{k\delta}(1 -\pi_{k\delta})$.

Finally, we obtain
\begin{align}\label{approx}
\eta^{\delta, m}_{k\delta} &= N_1^{m, \delta}(k\delta) + \ga N_2^{m, \delta}(k\delta) + \ga N_3^{m, \delta}(k\delta) .
\end{align} 

To summarize, we can approximate $Y_t$ and $ u_t$, $k\de\le t<(k+1)\de$, by $Y_{k\delta}^{\delta,m}$ and $u_{k\delta}^{\delta,m}$, where
\begin{align}\label{olY}
Y_{k\delta}^{\delta,m} = \Phi_{k\delta}^{\delta} \frac{1}{m} \sum_{i=1}^{m}\(c_1\rho^{i,\delta}(T,k\delta)+ c_2 \(\rho^{i,\delta}(T,k\delta)\)^2 \)
\end{align}
and
\begin{align}\label{Appu}
u_{k\delta}^{\delta,m} = (b - r + \ga \pi^{\delta}_{k\delta})Y_{k\delta}^{\delta,m} + \Phi_{k\delta}^{\delta} \eta^{\delta, m}_{k\delta}.
\end{align}

\begin{theorem}\label{error analysis}
There exists a constant $C$ such that for any $k\de\le T=1$, we have
\begin{align*}
\|u_{k\delta} - u^{\delta,m}_{k\delta}\|_2 \le C\(\sqrt{\delta} + \frac{1}{\sqrt{m}}\) 
\end{align*}
and 
\begin{align*}
\|Y_{k\delta} - Y^{\delta,m}_{k\delta}\|_2 \le C\(\sqrt{\delta} + \frac{1}{\sqrt{m}}\).
\end{align*}
\end{theorem}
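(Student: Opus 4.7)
The plan is to decompose each error into an Euler-discretization part of size $O(\sqrt{\de})$ and a Monte Carlo sampling part of size $O(1/\sqrt{m})$, and bound them separately. For the wealth process, using $Y_{k\de}=\Phi_{k\de}\EE(\theta|\cG_{k\de})$ with $\theta=c_1\rho_T+c_2\rho_T^2$, I introduce the hybrid intermediate
\[
\tilde Y_{k\de}^{m}:=\Phi_{k\de}^{\de}\,\frac{1}{m}\sum_{i=1}^{m}\bigl(c_1\rho^i(T,k\de)+c_2(\rho^i(T,k\de))^2\bigr)
\]
(exact copies of $\rho$, Euler copy of $\Phi$) and split $Y_{k\de}-Y_{k\de}^{\de,m}$ into three pieces: the Euler error on $\Phi$, the Monte Carlo error of the empirical mean, and the Euler error accumulated inside each $\rho^{i,\de}$.

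First I will establish, for every $p\ge 1$, uniform-in-$\de$ $L^p$ moment bounds on $\pi,\pi^{\de}$ (trivial since $\pi\in[0,1]$), on $D_{k\de}\pi^i$ and its Euler version (from the explicit form \eqref{eq0903e} and Gronwall applied to \eqref{linearSDE}), and on $\rho,\rho^{i,\de},\Phi,\Phi^{\de}$. These bounds are uniform in $\de$ because $b-r+\ga\pi$ is bounded, which gives Novikov-type control on the discrete exponential martingales. Second, Theorem \ref{Euler} applied to the globally Lipschitz SDE \eqref{newpi} yields $\|\pi_{l\de}-\pi^{\de}_{l\de}\|_p\le C\sqrt{\de}$, and the same argument restarted at time $k\de$ gives $\|\pi^i(l\de,k\de)-\pi^{i,\de}(l\de,k\de)\|_p\le C\sqrt{\de}$ uniformly in $i,k,l$. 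A coupled Gronwall-plus-BDG argument, using the Lipschitz dependence of the coefficients in \eqref{logg} and \eqref{linearSDE} on the remaining variables, then produces $\sqrt{\de}$-rates for $L\rho^i(T,k\de)$, $L\Phi_{k\de}$ and $D_{k\de}\pi^i(s,k\de)$; the rate transfers from $L\rho$ to $\rho$ via $|e^x-e^y|\le(e^{|x|}+e^{|y|})|x-y|$, Cauchy--Schwarz and the moment bounds above.

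Third, for the Monte Carlo piece: conditionally on $\cG_{k\de}$ the variables $\rho^{i,\de}(T,k\de)$ and $S_j^{i,\de}(T,k\de)$ are i.i.d., so the conditional variance of each empirical mean is $O(1/m)$; multiplication by the $L^4$-bounded factors $\Phi^{\de}_{k\de}$ or $b-r+\ga\pi^{\de}_{k\de}$ then yields the $1/\sqrt{m}$ contribution after taking unconditional expectations. Combining the three pieces by the triangle inequality proves the bound for $Y$. For $u_{k\de}$ the same recipe applies to each of $N_1,N_2,N_3$ after the splitting
\[
u_{k\de}-u_{k\de}^{\de,m}=(b-r+\ga\pi_{k\de})(Y_{k\de}-Y_{k\de}^{\de,m})+\ga(\pi_{k\de}-\pi^{\de}_{k\de})Y^{\de,m}_{k\de}+\bigl(\Phi_{k\de}\eta_{k\de}-\Phi^{\de}_{k\de}\eta^{\de,m}_{k\de}\bigr),
\]
where the last bracket is bounded by treating $N_1$ as a product of already-controlled factors, $N_3$ as a $ds$-Riemann sum (whose error is $O(\sqrt{\de})$ by step two), and $N_2$ as a stochastic Riemann sum approximating $\int_{k\de}^{T}D_{k\de}\pi^i(s,k\de)\,d\nu^i_s$, handled by the It\^o isometry conditional on $\cG_{k\de}$ together with the step-two estimate on $D\pi$.

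\textbf{Main obstacle.} The delicate point is the $N_2$ term: its integrand $D_{k\de}\pi^i(s,k\de)$ is itself generated by iterating a discrete linear SDE, so the Euler errors compound along the time grid and must be absorbed by a discrete Gronwall inequality that is uniform in the starting index $k$. A secondary difficulty is keeping the $L^p$-moment bounds on the Euler exponentials $\rho^{i,\de}$ and $\Phi^{\de}$ uniform in $\de$; here the enforced boundedness $\pi^{\de}\in[0,1]$ (thanks to the truncated diffusion $\bar\si$ in \eqref{newpi}) is precisely what makes uniform exponential moment control possible.
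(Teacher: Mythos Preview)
Your plan is correct and matches the paper's proof: both split the error into an Euler piece (controlled via Theorem~\ref{Euler} together with BDG and H\"older in $L^4$) and a Monte Carlo piece (controlled via the conditional i.i.d.\ structure of the particles given $\cG_{k\de}$), the only cosmetic difference being that the paper organizes the product decomposition for $u_{k\de}$ as two blocks $J_1+J_2$ rather than your three-term telescope. One small slip worth flagging: the Euler iterate $\pi^{\de}$ built from $\bar\si$ is \emph{not} literally confined to $[0,1]$---it can overshoot by one Gaussian increment and then freeze---so the uniform exponential moments of $\rho^{i,\de}$ and $\Phi^{\de}$ should be argued from the Gaussian tail of that overshoot (or by projecting $\pi^{\de}$ back into $[0,1]$ at each step) rather than from ``enforced boundedness''; the paper's proof is equally terse on this point.
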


\begin{proof}
Since we apply the Euler scheme for the new equation \eqref{newpi} which satisfies all the conditions in Theorem \ref{Euler}. Thus, 
\begin{align*}
\|\pi - \pi^{\delta}\|_4 \le C\sqrt{\delta}.
\end{align*}
Besides, since $\Phi_t$, $\rho_t$ and $D_t\rho$ are given by exponential stochastic integrals, then by the Burkholder-Davis-Gundy inequality and H\"{o}lder's inequality, we have
\begin{align*}
\|\Phi - \Phi^{\delta}\|_4 \le C\sqrt{\delta}, \quad \| v\rho - v^{\delta}\rho^{\delta}\|_4 \le C\sqrt{\delta}, \quad 	\|D_t\rho - D_t\rho^{\delta}\|_4 \le C\sqrt{\delta}.
\end{align*}
From the representation (\ref{eq0903a}) and the approximation (\ref{Appu}), we first estimate the error between $u_{k\delta}$ and $u^{\delta,m}_{k\delta}$, 
\begin{align} 
\Big\|u_{k\delta} - u^{\delta,m}_{k\delta}\Big\|_2 
&\le\bigg\|\Phi_{k\delta}(b-r+\ga\pi_{k\delta})\EE\(v(T,k\delta)\rho(T,k\delta)|\cG_{k\delta}\)\nonumber\\
&\quad\quad\quad - \Phi^{\delta}_{k\delta}(b-r+\ga\pi^{\delta}_{k\delta}) \frac{1}{m}\sum_{i=1}^{m}(v^{i,\delta}(T,k\delta)\rho^{i,\delta}(T,k\delta)\bigg\|_2 \nonumber\\
&\quad\;+\bigg\|\Phi_{k\delta}\EE(c_1D_{k\delta}\rho(T,k\delta)+2c_2\rho(T,k\delta)D_{k\delta}\rho(T,k\delta)|\cG_{k\delta}) \nonumber\\
&\quad\quad\quad -\Phi^{\delta}_{k\delta}\frac{1}{m}\sum_{i=1}^{m}\(c_1D_{k\delta}\rho^{i,\delta}(T,k\delta)+2c_2\rho^{i,\delta}(T,k\delta)D_{k\delta}\rho^{i,\delta}(T,k\delta)\)\bigg\|_2 \nonumber\\
&:= J_1 + J_2, 
\end{align}
where 
\begin{align}\label{J1}
J_1 &= \Big\|\Phi_{k\delta}(b-r+\ga\pi_{k\delta})\EE\(v(T,k\delta)\rho(T,k\delta)|\cG_{k\delta}\)\nonumber \\
&\quad\quad- \Phi^{\delta}_{k\delta}(b-r+\ga\pi^{\delta}_{k\delta}) \frac{1}{m}\sum_{i=1}^{m}(v^{i,\delta}(T,k\delta)\rho^{i,\delta}(T,k\delta)\Big\|_2 \nonumber \\
&\le \Big\|\Phi_{k\delta}(b-r+\ga\pi_{k\delta})\EE\(v(T,k\delta)\rho(T,k\delta)|\cG_{k\delta}\)\nonumber\\
&\quad\quad\quad - \Phi^{\delta}_{k\delta}(b-r+\ga\pi^{\delta}_{k\delta})\EE\(v(T,k\delta)\rho(T,k\delta)|\cG_{k\delta}\)\Big\|_2 \nonumber \\
& \quad\;+ \Big\| \Phi^{\delta}_{k\delta}(b-r+\ga\pi^{\delta}_{k\delta})\EE\(v(T,k\delta)\rho(T,k\delta)|\cG_{k\delta}\) \nonumber\\
&\quad\quad\quad - \Phi^{\delta}_{k\delta}(b-r+\ga\pi^{\delta}_{k\delta}) \frac{1}{m}\sum_{i=1}^{m}(v^{i,\delta}(T,k\delta)\rho^{i,\delta}(T,k\delta)\Big\|_2 \nonumber \\
&\le \Big\| \Phi_{k\delta}(b-r+\ga\pi_{k\delta}) - \Phi^{\delta}_{k\delta}(b-r+\ga\pi^{\delta}_{k\delta})\Big\|_4 \times\Big\|\EE\(v(T,k\delta)\rho(T,k\delta)|\cG_{k\delta}\) \Big\|_4 \nonumber \\
&\quad\;+ \Big\|\Phi^{\delta}_{k\delta}(b-r+\ga\pi^{\delta}_{k\delta})\Big \|_4 \times\Big\| \EE\(v(T,k\delta)\rho(T,k\delta)|\cG_{k\delta}\) )- \EE\(v^{\delta}(T,k\delta)\rho^{\delta}(T,k\delta)|\cG_{k\delta}\) \Big\|_4 \nonumber \\
&\quad\;+\Big \|\Phi^{\delta}_{k\delta}(b-r+\ga\pi^{\delta}_{k\delta}) \Big\|_4\times \Big\| \EE\(v^{\delta}(T,k\delta)\rho^{\delta}(T,k\delta)|\cG_{k\delta}\) - \frac{1}{m}\sum_{i=1}^{m}(v^{i,\delta}(T,k\delta)\rho^{i,\delta}(T,k\delta) \Big\|_4 \nonumber \\
&\le C\(\sqrt{\delta} + \frac{1}{\sqrt{m}}\), 
\end{align}
and
\begin{align}\label{J2}
J_2 &\le \Big\| \Phi_{k\delta} - \Phi^{\delta}_{k\delta}\Big\|_4 \times\Big\|\EE(c_1D_{k\delta}\rho(T,k\delta)+2c_2\rho(T,k\delta)D_{k\delta}\rho(T,k\delta)|\cG_{k\delta}) \Big\|_4 \nonumber \\
&\quad\;+ \Big\|\Phi^{\delta}_{k\delta} \Big\|_4 \times\Big\| \EE(c_1D_{k\delta}\rho(T,k\delta)+2c_2\rho(T,k\delta)D_{k\delta}\rho(T,k\delta)|\cG_{k\delta}) \nonumber\\
&\quad\quad\quad\quad\quad\quad\quad\quad - \EE(c_1D_{k\delta}\rho^{\delta}(T,k\delta)+2c_2\rho^{\delta}(T,k\delta)D_{k\delta}\rho^{\delta}(T,k\delta)|\cG_{k\delta}) \Big\|_4 \nonumber \\
&\quad\;+ \Big\|\Phi^{\delta}_{k\delta} \Big\|_4 \times\Big \| \EE(c_1D_{k\delta}\rho^{\delta}(T,k\delta)+2c_2\rho^{\delta}(T,k\delta)D_{k\delta}\rho^{\delta}(T,k\delta)|\cG_{k\delta}) \nonumber\\
&\quad\quad\quad\quad\quad\quad\quad\quad - \frac{1}{m}\sum_{i=1}^{m}\(c_1D_{k\delta}\rho^{i,\delta}(T,k\delta)+2c_2\rho^{i,\delta}(T,k\delta)D_{k\delta}\rho^{i,\delta}(T,k\delta)\) \Big \|_4 \nonumber \\
&\le C\(\sqrt{\delta} + \frac{1}{\sqrt{m}}\).
\end{align}
By (\ref{J1}) and (\ref{J2}), we have 
\begin{align} 
\Big\|u_{k\delta} - u^{\delta,m}_{k\delta}\Big\|_2 \le C\(\sqrt{\delta} + \frac{1}{\sqrt{m}}\).\nonumber
\end{align}
Similarly, we can prove 
\begin{align} 
\Big\|Y_{k\delta} - Y^{\delta,m}_{k\delta}\Big\|_2 \le C\(\sqrt{\delta} + \frac{1}{\sqrt{m}}\),
\end{align}
which converges to 0 if we take $m=n$ (in this case, $\delta=\frac{1}{m}$).
\end{proof}
\begin{remark}
In Xiong and Zhou's \cite{XZ} model, only the convergence of $Y^n_t$ to $Y_t$ is proved, however, due to technical difficulty, the rate of convergence of $Y^n_t$ to $Y_t$ is not established in that paper. The above theorem not only proves the convergence of $Y^n_t$ to $Y_t$, but also gives the rate of convergence. Note that the errors in our numerical scheme consist of the error from the Euler approximation and that from the SLLN only. From this point of view, under the drift uncertainty model, the numerical scheme we proposed is more efficient than that of \cite{XZ}. 
\end{remark}

\subsection{Numerical results}
In  last subsection, we proved the convergence of the numerical scheme. Now we use Matlab to give an example to compare our method with that of \cite{XZ}. To this end, we first construct a BSDE whose drift term is random, and the solution $(X(t), Z(t))$ can be computed explicitly. Then we obtain the numerical approximations for $(X(t), Z(t))$ by applying two different  schemes. Next, we compare these two approximations with the actual solution. Finally, we apply the new algorithm to simulate the drift uncertainty model. 

For convenience, we denote the numerical method proposed by Xiong and Zhou \cite{XZ} as the ``old algorithm'', the one proposed by us as the  ``new algorithm'' and the explicit solution as the ``true value''. 

For the appearance of presentation, we now write $W_{t}$ as $W(t)$ (especially when $t$ has a complicated expression).
Let 
\[
H(t) = \int_0^t (1+W(s))dW(s) - \int_0^t (W(s)^2 + 2W(s))ds.
\]
We consider a BSDE with random coefficients as following 
\begin{eqnarray} \label{example}
\begin{cases}
dX(t) = \Big(-\frac{1}{2}\big(1-2W(t)-W(t)^2\big)X(t) - (1+W(t))Z(t)\Big)dt + Z(t)dW(t),\\
X(T) = \exp\big(H(T) - 2T\big), \quad\quad t\in[0,T].
\end{cases}
\end{eqnarray}
Following the steps (see Theorem 2.2, Chapter 7 of \cite{YZ}), we can solve the above linear BSDE \eqref{example} explicitly as 
\begin{align}\label{real}
X(t) = e^{H(t) -2t}, \quad Z(t) = (1+W(t))X(t).
\end{align}

For simplicity, let $T = 1$. We discrete $[0,1]$ into $n_1n_2$ small intervals. Denote $\delta_1 = \frac{1}{n_1}$ and $\delta_2 = \frac{1}{n_1n_2}$.

Let $\theta = \exp\Big(2\int_0^T (1+W(s))dW(s) - 2\int_0^T (1 + W(s))^2ds\Big)$ and $\Phi(t) = e^{-H(t)}$. Using the old algorithm, the approximation for $(X(t), Z(t))$ is given by
\begin{align*}
\begin{cases}
X^{m,\delta_1}(t) = \Phi^{\delta_1}(t)N^{m,\delta_1}(t),\\
Z^{m,\delta_1}(t) = -X^{m,\delta_1}(t) + \Phi^{\delta_1}(t)\eta_1^{m,\delta_1}(t),
\end{cases}
\end{align*}
where
\[
N(t) = \EE(\theta | \mathcal{F}_t^{W}),
\]
and $\Phi^{\delta_1}(t)$ is the approximation of $\Phi(t)$ generated by the Euler scheme, $N^{m,\delta_1}(t)$ is the approximation of $N(t)$ generated by the particle representation as well as the Euler scheme, and 
\begin{align}\label{Phi_ndelta}
\eta_1^{m,\delta_1}\(\frac{k}{n_1}\) &:= n_1 \sum_{j=1}^{n_2}\( N^{m,\delta_2}\(\frac{k-1}{n_1} + \frac{j}{n_1n_2}\) - N^{m,\delta_2}\(\frac{k-1}{n_1} + \frac{j-1}{n_1n_2}\) \) \nonumber \\
&\quad\;\times \( W^{\delta_2}\(\frac{k-1}{n_1} + \frac{j}{n_1n_2}\) - W^{\delta_2}\(\frac{k-1}{n_1} + \frac{j-1}{n_1n_2}\) \) , \\
& k = 1,2,\cdots, n_1 -1, \quad n_1, n_2 = 1,2,\cdots.\nonumber
\end{align}

On the other hand, since $\theta$ is Malliavin differentiable, we get 
\begin{align}\label{meta}
\eta_2(t) &:= \EE\(D_t\theta | \mathcal{F}_t^{W}\) \nonumber\\
& = \EE\(e^{2\int_t^T (1+W(s))dW(s) - 2\int_t^T (1 + W(s))^2ds}\Big[2W(T) - 4\int_t^T(1+W(s))ds + 2\Big] \Big| \mathcal{F}_t^{W}\) .
\end{align}
By the new algorithm, the approximation for $(X(t), Z(t))$ is given by
\begin{align*}
\begin{cases}
X^{m,\delta_2}(t) = \Phi^{\delta_2}(t)N^{m,\delta_2}(t),\\
Z^{m,\delta_2}(t) = -X^{m,\delta_2}(t) + \Phi^{\delta_2}(t)\eta_2^{m,\delta_2}(t),
\end{cases}
\end{align*}
where $\eta_2^{m,\delta_2}(t)$ is the approximation of $\eta_2(t)$ generated by the particle representation as well as the Euler scheme.

Using the aforementioned algorithms, we generate the following figures.

\bigskip
\begin{figure}[H]
\centering
\begin{tabular}{cc}
\begin{minipage}{0.45\linewidth}
\includegraphics[width=\textwidth]{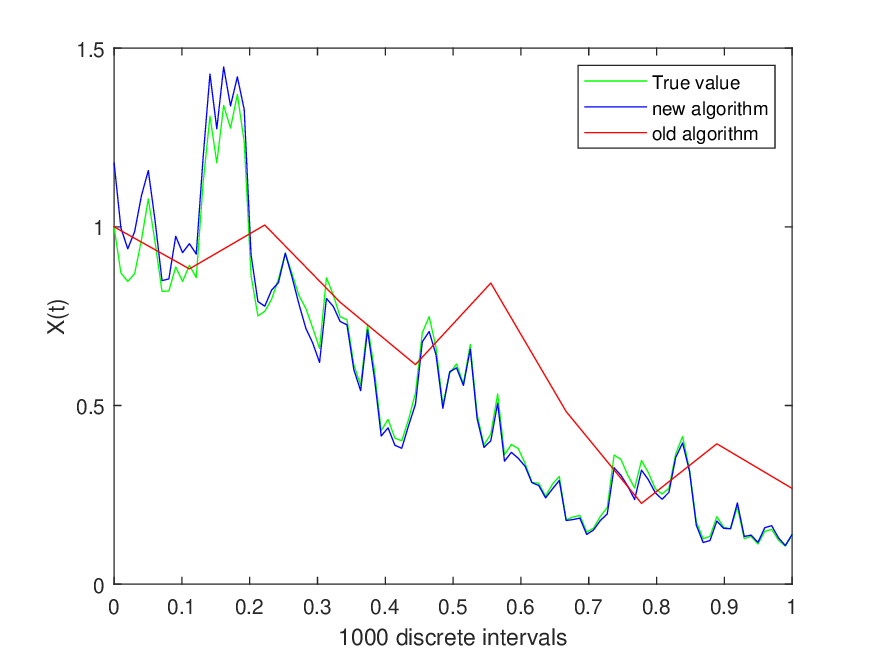}
\caption{$X(t)$ with $100$ discrete intervals}
\label{figure1}
\end{minipage}
\qquad
\begin{minipage}{0.45\linewidth}
\includegraphics[width=\textwidth]{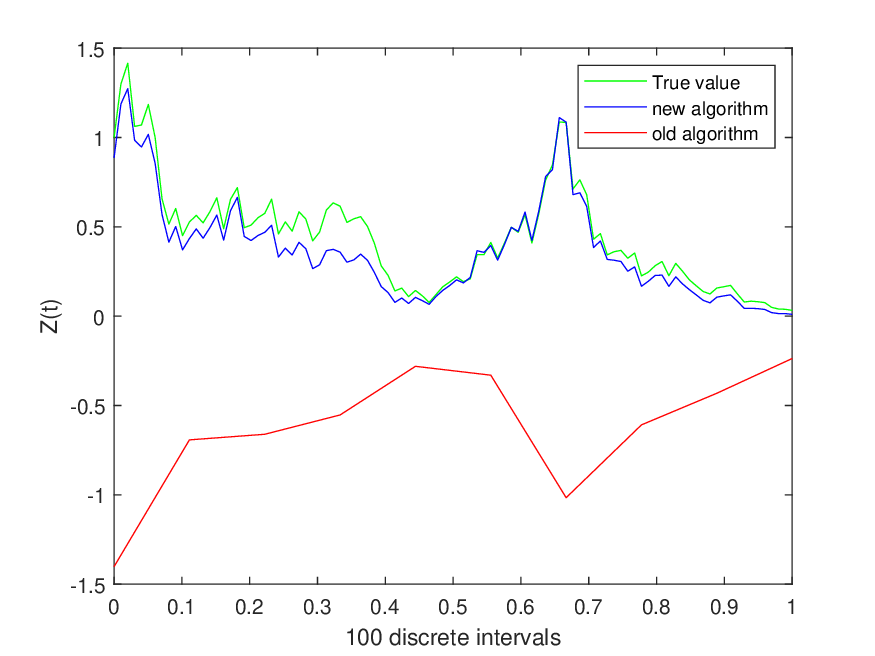}
\caption{$Z(t)$ with $100$ discrete intervals}
\label{figure2}
\end{minipage}
\end{tabular}
\end{figure}

\begin{figure}[H]
\centering
\begin{tabular}{cc}
\begin{minipage}{0.45\linewidth}
\includegraphics[width=\textwidth]{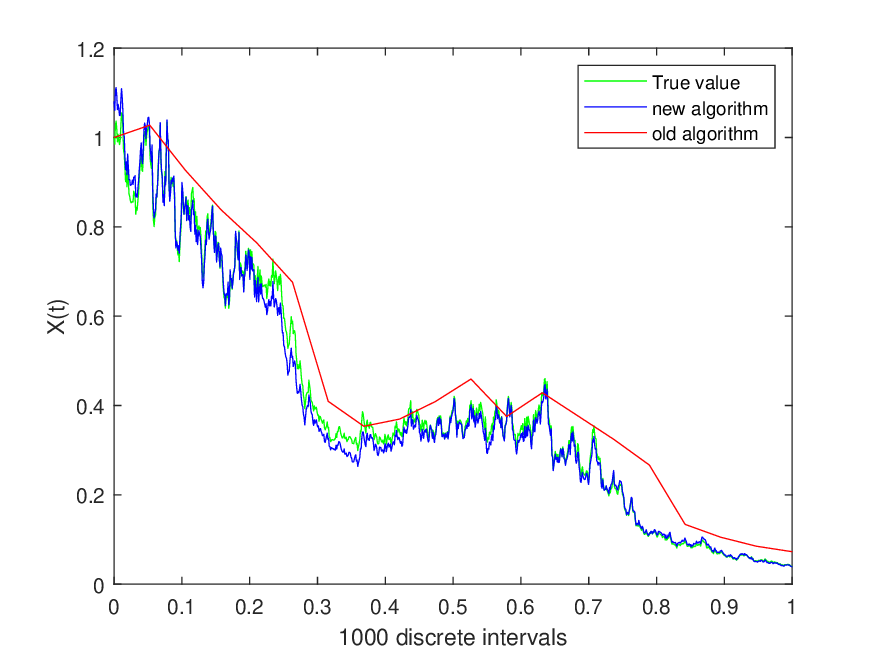}
\caption{$X(t)$ with $10^3$ discrete intervals}
\label{figure3}
\end{minipage}
\qquad
\begin{minipage}{0.45\linewidth}
\includegraphics[width=\textwidth]{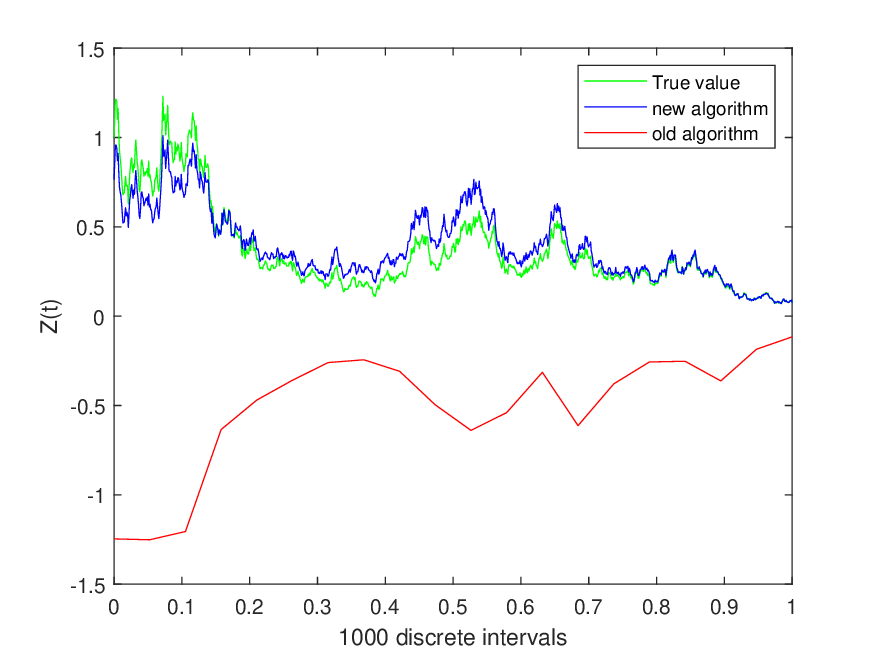}
\caption{$Z(t)$ with $10^3$ discrete intervals}
\label{figure4}
\end{minipage}
\end{tabular}
\end{figure}

It can be seen from Figures \ref{figure1}, \ref{figure2}, \ref{figure3}, \ref{figure4} that our new numerical scheme well approximated the true processes $X(t)$ and $Z(t)$. The curves of $X(t)$ and $Z(t)$ generated by the new  scheme are almost the same as the true processes. In contrast, the paths generated by the old  scheme are relatively far off due to the double-partition which is the main drawback in that scheme. The new algorithm overcomes this drawback by using the Malliavin calculus.

Finally, we apply our efficient numerical scheme to simulate the wealth process $Y_t$ and the optimal control $u_t$ for the drift uncertainty model.

We choose the parameters as following: $n =1000$, $\delta = \frac{1}{1000}$, $m = 1000$, $r = 0.03$, $a= 0.04$, $b=0.032$, $y_0 = 100$, $z = y_0\cdot(1+r+0.03)$, and $\pi_0 = 0.1$. 

Figure \ref{figure5} below is the numerical results for the innovation process $\nu_t$, the wealth process $Y_t$ and the optimal control  $u_t$. 

\begin{figure}[H]
\centering
\includegraphics[width=1\textwidth]{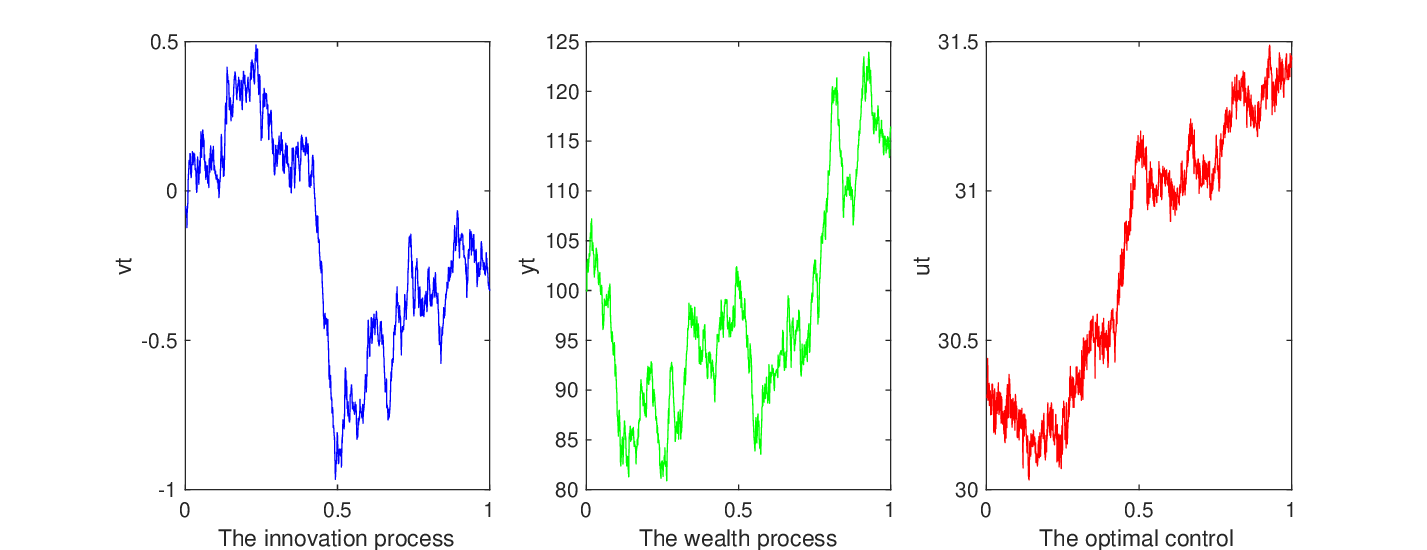}
\caption{drift uncertainty model with $10^3$ discrete intervals. }
\label{figure5}
\end{figure}

\end{document}